\newcommand{\FF}{\mathbb{F}}
\newcommand{\HH}{\mathbf{H}}
\newcommand{\cv}{{\bf c}}
\newcommand{\ev}{{\bf e}}
\newcommand{\sv}{{\bf s}}
\newcommand{\tv}{{\bf t}}
\newcommand{\uv}{{\bf u}}
\newcommand{\vv}{{\bf v}}
\newcommand{\xv}{{\bf x}}
\newcommand{\yv}{{\bf y}}
\newcommand{\rv}{{\bf r}}
\newcommand{\Rc}{\mathcal R}
\newcommand{\bH}{\mathbf H}
\newcommand{\transpose}[1]{#1^{\intercal}}
\def\rank#1{{\rm rank}{\, #1}}
\def\scalprod#1#2{({#1}\, |\, {#2})}
\DeclareMathOperator{\Ker}{Ker}
\DeclareMathOperator{\Id}{\mathbf I}
\newenvironment{dashlist}{\begin{list}{--}{\itemsep=0mm}}{\end{list}}
\newtheorem{theorem}{Theorem}
\newtheorem{defn}[theorem]{Definition}
\newtheorem{lem}[theorem]{Lemma}
\newtheorem{cor}[theorem]{Corollary}
\newtheorem{prop}[theorem]{Proposition}
\title{Coding Constructions for Efficient Oblivious Transfer from Noisy Channels}
\author{Frédérique Oggier 
\thanks{
Division
of Mathematical Sciences, School of Physical and Mathematical
Sciences, Nanyang Technological University, Singapore.
Email: frederique@ntu.edu.sg
} \and Gilles Zémor
\thanks{Institut de Math\'ematiques de Bordeaux, UMR 5251, universit\'e
  de Bordeaux, France}}
\begin{document}
\maketitle 

\begin{abstract}
We consider oblivious transfer protocols performed over binary symmetric
channels in a malicious setting where parties will actively cheat if they can. 
We provide constructions purely based on coding theory that achieve an 
explicit positive rate, the essential ingredient being the existence of linear codes
whose Schur products are asymptotically good.
\end{abstract}
%
%
\section{Introduction}

A 1-out-of-2 oblivious transfer is a cryptographic protocol between
two players, Alice, who owns two secrets, and Bob, who wishes to
acquire one of them. The protocol ensures that one of the secrets
is delivered to Bob, while no information about the other secret
leaks:
furthermore, Alice has no information about which secret Bob selects.

1-out-of-2 oblivious transfer protocols were introduced by Even,
Goldreich and Lempel \cite{EGL}, though it was shown that they are
equivalent to a variant originally proposed by
Rabin \cite{Rabin}. Oblivious transfer
has found numerous applications since, notably to multiparty
computation \cite{Harnik}. 
Oblivious transfer was first considered in the case
of computationally bounded participants, but Crépeau and Kilian
later \cite{CK} introduced the idea of unconditionally secure
oblivious transfer, by considering the situation when the players 
have access to a noisy channel that they cannot control.
It is natural to view the noisy channel as a resource, and to use
the number of symbols sent over the channel as measure of 
the efficiency of the oblivious transfer protocol.
The first protocol achieving oblivious transfer of $1$-bit secrets
from a binary
symmetric channel (BSC) \cite{CK} required $\Omega(n^{11})$ transmitted bits 
over the channel  to guarantee a probability of protocol failure of $2^{-n}$.
 This was improved by Cr\'epeau in \cite{Crepeau}, who obtained a
 protocol that requires only $O(n^3)$ uses of the BSC for a one bit
 secret, and was later again improved
 by Crépeau, Morozov and Wolf \cite{CMW} to $O(n^2+\varepsilon)$.

The notion of oblivious transfer capacity was introduced in \cite{NW}
and further developed in \cite{Ahlswede}.
In this setting, the two secrets are not single bits anymore, but strings
of bits, and the oblivious transfer rate is defined as the quotient of
the number of
bits of each secret divided by the number of bits transmitted over the
channel: the oblivious transfer capacity is then equal to the supremum of the
set of
achievable rates. Oblivious transfer is studied in \cite{Ahlswede} in the
semi-honest (or honest but curious) model, 
meaning that the players do not deviate from the protocol.
In this model it is shown that constant rate oblivious
transfer is possible for most discrete memoryless channels.
Capacities are computed for some channels in \cite{Ahlswede}, and
lower bounds provided
for others, including the binary symmetric channel, though an actual
protocol is not proposed.

Later, Ishai et al.~\cite{Ishai} improved the results of Cr\'epeau
\cite{Crepeau}, and 
are the first to show that constant rate oblivious transfer protocols
are possible in a fully malicious setting where both players will
actively cheat if they can.
The paper \cite{Ishai} also devises a protocol that achieves efficient
oblivious transfer of many $1$-bit secrets in parallel. The protocols
of \cite{Ishai} are quite intricate and call upon a number of
cryptographic primitives. In the present paper we again pick up the
issue of devising constant rate oblivious transfer protocols and apply
a coding-theory approach to the problem. The end result consists of
constant rate oblivious transfer protocols that are more direct than
that of \cite{Ishai} and that allow us to compute an achievable rate
that is of different order of magnitude than what could eventually be
derived from \cite{Ishai}.\footnote{Y. Ishai, personal communication.}
Crucial to our protocols is the notion of Schur product of a linear
code $C$. The Schur product (or square) of a linear code $C\subset \FF_q^n$ is the
linear span of all coordinate-wise products $c*c'=(c_1c_1'\cdots
c_nc_n')$ of codewords $c=(c_1,\ldots ,c_n), c'=(c'_1,\ldots ,c'_n)$
of $C$.
The central ingredient in our construction is a family of
asymptotically good linear codes whose squares are also asymptotically
good.
In the case of the binary alphabet, 
the existence of such families of codes is far from immediate and they were not
known to exist before the work of Randriambololona~\cite{Hugues}. 
In fact, during the early investigations leading up to this work,
we realised the usefulness of such codes but were unable to
come up with a construction, and inquiries into the matter provided
motivation for the paper \cite{Hugues}, as is mentioned in its
introduction.
Codes with good squares also appear indirectly in \cite{Ishai}, since
they are an essential component of the secret sharing schemes with
multiplicative properties that \cite{Ishai} calls upon. The use of
codes with good squares is arguably more direct in the present work.
In the next section we give an overview of our constructions and
outline the structure of the paper.

\section{Overview}\label{sec:overview}
We assume Alice and Bob have access to two channels: (1) a noiseless
channel and (2) a discrete memoryless channel.
How unconditional oblivious transfer can be achieved  is best understood in the simple case
of an erasure channel of erasure probability $p$, say. First Alice
generates a string $\rv$ of $2n_0$ random bits and sends them over the
noisy channel.
Bob will receive approximately $2pn_0$ erasures instead of the original
symbols.
Bob then separates the index set $[1,2n_0]$ of the received symbols into
two disjoint sets of equal size $n_0$, i.e. $[1,2n_0]=I\cup J$,
in such a way that all the erased symbols have their coordinates 
in one of the two sets
(assuming there are no more than $n_0$ of them, which is typically the
case when $p<1/2$) and communicates the sets $I$ and $J$
noiselessly to Bob. 
Denoting by $\rv_I$ and $\rv_J$ the corresponding two $n_0$-bit
strings derived from the bits initially sent by Alice, she can now use them to
send noiselessly to Bob $\xv + \rv_I$ and $\yv +\rv_J$ where $\xv$ and
$\yv$ are some $n_0$-bit vectors. With this procedure Alice 
sends to Bob the vectors $\xv$ and $\yv$ through what amounts to two different
channels, one of which is noiseless, the other being in effect an
erasure channel. Alice has no way of knowing which of the two channels
is the noiseless one, meaning she cannot know which secret Bob will
want, and from Bob's side, whatever may be the way he chooses the two sets
$I$ and $J$, at least one of the two vectors $\xv$ and $\yv$ will be
submitted to an erasure channel of erasure probability at least $p$.
We remark now that all that is needed to complete the protocol is to apply standard
wiretap-channel techniques to transmit messages through the vectors
$\xv$ and $\yv$ that leak no information to an eavesdropper that would
access $\xv$ and $\yv$ through a channel of erasure probability at
least $p$. 
The oblivious transfer capacity for this honest but curious setting is explicitly computed in \cite{Ahlswede} as a function of the error probability $p$.

We now focus on our central topic, 
namely the case when the noisy channel is a binary symmetric
channel. All known protocols start with the following bit duplication trick,
first introduced by Cr\'epeau and Kilian \cite{CK} and also used in
\cite{Crepeau}. Alice again
generates a string $\rv$ of $2n_0$ random bits, but this time every bit
$r_i$ is sent over the noisy channel as a duplicate couple
$(r_i,r_i)$. We remark that whenever a couple $(0,1)$ or $(1,0)$ is
received, then either $(0,0)$ or $(1,1)$ must have been sent with equal probability $1/2$ at the receiver (Bob's) end. Therefore Bob has no choice than to consider this situation as an erasure, and what duplication achieves is to transform the binary
symmetric channel into a mixed channel with errors and erasures.
Again, Bob partitions the index set $[1,2n_0]$ into two $n_0$-bit sets $I$
and $J$, one of which indexes all the erased positions. This again
creates two vectors $\rv_I$ and $\rv_J$, one of which is received with
more noise, namely a mixture of errors and erasures, than the other
which is erasure-free. We have effectively created two virtual noisy
channels one of which is noisier than the other, and such that Alice
does not know which is the noisiest.
At this point we make the remark that wire-tap channel techniques are
again sufficient to extract from these two channels a semi-honest
oblivious transfer protocol. We develop this approach in Section
\ref{sec:P0}, which requires a treatment of the somewhat non-standard
mixed error-erasure wiretap channels. The result is a constructive oblivious
transfer protocol $P_0$ for an $m$-bit secret that is a generalisation
of a protocol of \cite{Crepeau} for $1$-bit secrets, and that achieves the
lower bound on the oblivious transfer capacity computed in
\cite{Ahlswede} for a mixed error-erasure channel. The computations of 
\cite{Ahlswede} are purely information-theoretic and no explicit
schemes were suggested to achieve them. By optimising over the channel
parameter we obtain a positive rate $\Rc_0=0.108$ for this first
protocol $P_0$.
For the rate $\Rc$ of a 1-out-of-2 oblivious transfer protocol (two secrets)
we use the following definition, consistent with \cite{Ahlswede}:

\begin{defn}\label{def:rate}
The {\em rate} $\Rc$ of an oblivious transfer protocol of one out of
two $m$-bit secrets 
is the ratio of the number of secrets bits, namely $2m$, over the
number $N$ of binary symbols transmitted over the channel.
\end{defn}

Protocol $P_0$ ensures that an honest but curious Bob will have no knowledge on
at least one of the two secrets. To measure the possible leakage of information
about a secret, we first view the two secrets as uniform and independent random variables 
in $\{0,1\}^m$. We write them therefore as $X,Y$. The protocol $P_0$ would be ideal if we
could state regarding Bob's view that:
\begin{align*}
\text{Either}\quad & H(X|Y,\mathcal{O})=m\\
\text{Or}\quad     & H(Y|X,\mathcal{O})=m
\end{align*}
where $\mathcal{O}$ is what Bob observes during the protocol and $H$ is
Shannon's entropy. We prove a lower bound on $H(X|Y,\mathcal{O})$ that
explicitly states how close (in fractions of bits) the protocol is from the ideal scenario. 

Protocol $P_0$ is only valid in a semi-honest model where Alice
does not deviate from her instructions. Alice's goal in cheating is restricted
to trying to figure out the secret that Bob wants (she is not interested in
disrupting the protocol, i.e. to make it fail). 
Contrary to the pure erasure
channel case, Alice could actively cheat by transmitting over the
binary symmetric channel some falsely duplicated bits $r_i$ under the
form $(0,1)$ or $(1,0)$, instead of $(r_i,r_i)$. If $(0,1)$ is sent
over a binary symmetric channel with transition probability $p<1/2$,
then the probability $p^2+(1-p)^2$ that $(0,1)$ or $(1,0)$ is received (an erasure)
is always larger than if $(0,0)$ or $(1,1)$ had been transmitted.
By sending a few tracker pairs of symbols in this way, a tellingly
large number of their indices will end up in the subset, $I$ or
$J$ corresponding to the secret that Bob does not want, thus yielding 
critical information to Alice on which secret Bob is trying to acquire.

A crucial observation made by Cr\'epeau \cite{Crepeau} is that the
number of falsely duplicated tracker bits $r_i$ that Alice can use can only
be a limited portion of the total number of bits transmitted over the
noisy channel. This is because these bits have a higher probability of
turning up on Bob's side as erasures, and if Bob receives too many
erased symbols, contradicting the law of large numbers, he will know
that Alice has almost certainly cheated. Hence if one repeats the
protocol $P_0$ many times, say $n_0^2$ times where $n_0$ is (as above, up
to a multiplicative constant) the number of noisy channel uses for
$P_0$, this makes the number of channel uses equal to $n_0^3$, and the
number of corrupt tracker bits that Alice can get away with using
without arousing Bob's suspicion, is, by the law of large numbers,
 not significantly more than the order of
$n_0^{3/2}$: this implies that Alice has to
be honest for the majority of the $n_0^2$ $P_0$-protocols that are
played out, otherwise she will be exposed with probability tending to
$1$ with $n_0$.

The following idea is then used by Cr\'epeau \cite{Crepeau} to obtain
an oblivious transfer protocol secure against malicious participants
that would cheat if they could. The treatment of \cite{Crepeau} focuses
on oblivious transfer of single bit secrets, but it applies just as
well to string oblivious transfer. Apply $n$ times the protocol $P_0$
to intermediate secret pairs $x_i,y_j$, $i=1,\ldots,n$. The strings
$x_i\in\{0,1\}^m$ are chosen randomly such that $x_1+\cdots +x_n= s$
and the $y_i$ are defined as $y_i=x_i+s+t$, $i=1,\ldots,n$, where $s$ and $t$
are Alice's secrets to be obliviously transferred. Now to foil Alice's
tracking strategy, Bob will, for every $i$, randomly ask for either
$x_i$ or $y_i$, taking care only to ask for an even number of ${y_i}'s$
if he wishes to eventually acquire $s$, and an odd number of ${y_i}'s$
if he wishes to acquire $t$. We see that summing all the intermediate
secrets Bob has acquired yields either $s$ or $t$ according to his
wish, and Alice who can only cheat on a fraction of the $P_0$
protocols, obtains no information on the eventual secret, $s$ or $t$,
obtained by Bob.

The above repetition scheme gives vanishing rates however. The core
strategy developed in the present paper is to again repeat $n$ times the
protocol $P_0$, but to replace the condition
$x_1+\cdots +x_n= s$ by a generalised condition
\begin{equation}
  \label{eq:syndrome}
  \HH\begin{bmatrix}x_1\\ \vdots \\ x_n \end{bmatrix} = \sv
\end{equation}
where $\HH$ is a suitably chosen binary $r\times n$ matrix, yielding
secrets $\sv$ and $\tv$ of length $rm$ rather than $m$ (every coefficient of $\sv$ lives in $\{0,1\}^m$). It turns out that central
among the required properties of $\HH$ is that it generates a binary
linear code
with a square that has a large minimum distance.
We develop this approach in Section \ref{sec:posOT} where we propose first
an oblivious transfer protocol $P_1$ that prevents Alice from cheating
but introduces cheating possibilities for Bob (whose goal is to obtain information about the other secret than the one being asked), and then introduce a
variation $P_1'$ of $P_1$ which prevents both Alice and Bob from
cheating. The protocol $P_1'$ adds to the protocol $P_1$ a compression
function applied to $\sv$ and $\tv$.
These protocols achieve a positive rate of respectively
$\Rc_1\approx 0.69\; 10^{-4}$ and $\Rc_1'\approx 0.34\; 10^{-4}$.

In Section~\ref{sec:qary}, we introduce generalisations $P_2$ and
$P_2'$ of protocols $P_1$ and $P_1'$ that replace the binary code
generated by $\HH$ in \eqref{eq:syndrome} by a $q$-ary code, for $q$ a power of 2. This
allows us to use algebraic geometry codes with a much improved rate, 
with the drawback
that the protocol $P_0$ has to be replaced by a less efficient
1-out-of-$q$ semi-honest oblivious transfer protocol. Overall, the rates
of the protocols $P_2$ and $P_2'$ improve upon $P_1$ and $P_1'$,
giving $\Rc_2\approx 1/1250$, $\Rc_2'\approx 1/2500$.

We finish this overview with a formal definition of oblivious transfer considered in the malicious case.

\begin{defn}\label{def:OT}
Given a noiseless channel with unlimited usage and a binary symmetric channel (BSC), 
a {\em 1-out-of-2 oblivious transfer protocol of rate $\Rc$} consists of a two
player protocol where one player, Alice, possesses two secrets $\sv$ and $\tv$
of $m$ bits, and the second player, Bob, asks Alice for one of the two secrets.
The protocol uses communication over both channels and should satisfy the following properties:
\begin{enumerate}
\item
The ratio of the number $2m=|\sv|+|\tv|$ of secrets bits over the total number
$N$ of binary symbols transmitted over the noisy channel is $\Rc$.
\item
The protocol is correct, meaning that if Bob follows the protocol, he will obtain the secret he wishes with probability 
that tends to $1$ when $N$ goes to infinity.
\item
Bob, whether he cheats or not, has virtually no information on at least one
secret, meaning
\begin{align*}
\text{Either}\quad & H(S|T,\mathcal{O})\geq H(S)-\delta\\
\text{Or}\quad     & H(T|S,\mathcal{O})\geq H(T)-\delta
\end{align*}
where $S$ and $T$ are $\sv$ and $\tv$ viewed as random variables with uniform
distribution, $\mathcal{O}$ is what Bob observes during the protocol and $\delta$
is a quantity that tends to $0$ when $N$ goes to infinity.
\item
A cheating Alice who is trying to gain non-trivial information on which secret Bob is
asking for will either fail at obtaining anything or be accused of cheating by
Bob with probability tending to $1$ when $N$ goes to infinity.
It may happen that Bob accuses Alice of cheating when she is behaving honestly,
but this happens with probability that tends to $0$ when $N$ goes to infinity.
\end{enumerate}
\end{defn}

We remark that:
\begin{enumerate}[leftmargin=*]
\item[(1)]
In what follows, we will always assume that the secrets of Alice are two
independent uniformly distributed strings. This assumption is made without loss
of generality. Indeed,  since the noiseless channel is assumed to be available at no cost, Alice may always one-time pad her secrets $\sv$ and $\tv$ by computing $\sv+\xv$ and $\tv+\yv$ and communicating them noiselessly to Bob, for some independent uniform
random strings $\xv$ and $\yv$. After this, oblivious transfer of the
secrets $\sv$ and $\tv$ is equivalent to oblivious transfer of the random
strings $\xv$ and $\yv$.
\item[(2)]
This condition on secrecy used in \cite{Ahlswede} is
$H(K_{\bar{Z}}|Z,\mathcal{O})$, where $Z$ is a random variable that models the
choice of a secret, and $K_{\bar{Z}}$ represents the secret which was not
chosen. Our condition is slightly stronger since it assumes the complete knowledge of one secret is given. 
Deviating from the definition of \cite{Ahlswede} was required 
since it makes no sense in the malicious context to model the choice of a secret by a binary random variable. 
Indeed, we will see that in some instances Bob can try to extract from the
protocol some mixture of partial information from both secrets.
\end{enumerate}

Our main results are the protocols  $P_1,P_2$ and $P'_1,P_2'$. Protocols $P_1'$
and $P_2'$ satisfy Definition~\ref{def:OT} with all probabilities that are
required to tend to zero doing so subexponentially, i.e. scaling as
$\exp(-N^\alpha)$ for $0<\alpha<1$. The quantity $\delta$ in Point 3. of the
definition is also subexponential in $N$. Protocols $P_1$ and $P_2$ are preliminary versions of 
protocols $P_1'$ and $P_2'$ where only Alice is fully malicious while Bob is
assumed to be honest-but-curious.

%
%
\section{A First Binary Oblivious Transfer Protocol}
\label{sec:P0}

Alice and Bob have access to two channels: (1) a noiseless channel and
(2) a binary symmetric channel (BSC) with crossover probability
$\varphi <1/2$.
The binary field over $\{0,1\}$ is denoted by $\FF_2$, and for $a \in \FF_2$, $\bar{a}$ denotes the other element of $\FF_2$.

The protocol $P_0$ below is a slight variation of the oblivious
transfer proposed by Cr\'epeau in~\cite{Crepeau}, allowing Alice's two
secrets to be strings of
$m=n_0\epsilon(1-h(\tfrac{\varphi^2}{(1-\epsilon)}))$ bits, instead of 1
bit, where $h$ denotes the binary entropy function, and where we have set
$\epsilon = 2\varphi(1-\varphi)$.
The total number of noisy channel uses is $N=4n_0$.

\begin{framed}
{\bf Protocol $P_0$.} 
Alice has two (column) secrets $x,y\in\FF_2^m$. Alice and Bob agree on an
$((n_0-k)+ m)\times n_0$ binary matrix ${\bf H}'$ of the form  
\[
{\bf H}'=
\begin{bmatrix}
{\bf H}_0 \\
{\bf H}_1
\end{bmatrix}
\]
where ${\bf H}_0$ is the parity check matrix of some $(n_0,k)$ linear code $C_0$, which is capacity achieving over a BSC with crossover probability $\tfrac{\varphi^2}{(1-\epsilon)}$, and comes with an efficient decoding algorithm, while ${\bf H}_1$ is chosen uniformly at random. 
\begin{enumerate} 
\item
Alice generates a string ${\bf r}=(r_1,\ldots,r_{2n_0})$ of $2n_0$ random bits and sends $2n_0$ pairs $(r_i,r_i)$ of random bits to Bob over the BSC.
\item
For every pair of the form $(r_i,r_i)$ or $(\bar{r}_i,\bar{r}_i)$, Bob
decides that the bit $r_i$ or $\bar{r}_i$ is successfully received. He
declares an erasure if he receives $(r_i,\bar{r}_i)$ or
$(\bar{r}_i,r_i)$. Bob partitions the indices $[1,2n_0]$ into two sets:
$I$ has size $n_0$ and contains only indices corresponding to
successfully received bits, while $J$, also of size $n_0$, contains the
rest of the indices. This is assuming that Bob wants to know the
secret $x$: if instead he prefers the secret $y$, then he will reverse
the roles of $I$ and $J$. To each set corresponds a string of noisy
random bits ${\bf r}'_{I}$ and ${\bf r}'_{J}$. 
For ${\bf r}'_{I}$, the noise comes from Bob accepting
$(\bar{r}_i,\bar{r}_i)$ while Alice sent $(r_i,r_i)$. For ${\bf
  r}'_{J}$, the noise also includes erasures. Bob sends both sets of
indices $I$ and $J$ to Alice over the noiseless channel. Alice then permutes uniformly at random elements in $I$ and in $J$ and sends the permutations to Bob over the noiseless channel.
\item
Alice picks uniformly at random two (column) codewords ${\bf c}_x$ and
${\bf c}_y\in C_1$ that satisfy respectively
\[
{\bf H}'\cv_x=
\begin{bmatrix}
{\bf 0}\\
x
\end{bmatrix},~
{\bf H}'\cv_y=
\begin{bmatrix}
{\bf 0}\\
y
\end{bmatrix}
\]
and sends ${\bf c}_x+{\bf r}_{I}$ and ${\bf c}_y+{\bf r}_{J}$ to Bob over the noiseless channel. 
\item
Bob computes $({\bf c}_x+{\bf r}_{I})+{\bf r}'_{I}$ to find ${\bf
  c}'_x$, a noisy version of ${\bf c}_x$. Bob decodes ${\bf c}'_x$,
recovers ${\bf c}_x$, and computes ${\bf H}_1\cv_x=x$.  
\end{enumerate}
\end{framed}

The protocol $P_0$ requires $2n_0$ uses of the BSC channel for each secret. It provides an oblivious transfer protocol provided that Alice is honest.
 
\begin{paragraph}{Suppose both Alice and Bob are honest.}
If Alice is honest and sends pairs of the form $(r_i,r_i)$, Bob receives $(r_i,r_i)$ with probability $(1-\varphi)^2$ and $(\bar{r}_i,\bar{r}_i)$ with probability $\varphi^2$. 
He will decide an erasure with probability $2\varphi(1-\varphi)=\epsilon < 1/2$, and accept a bit with probability $1-\epsilon$. 
This can be seen as an instance of an imperfect binary erasure channel
(BEC) with erasure probability $\epsilon$: when Bob decides that a random bit is correctly received, there is still a probability $\varphi^2$ of getting the wrong random bit.
\begin{itemize}
\item 
Since Bob accepts a pair of random bits with probability $1-\epsilon$ ($\epsilon<1/2$), he should receive on average $2(1-\epsilon) n_0$ non-erased symbols, and $I$ can be assumed of size $n_0$.  
\item
The string ${\bf r}'_{I}$ is ${\bf r}_{I}$ affected by an additive noise, that is 
${\bf r}'_{I} = {\bf r}_{I}+{\bf e}$, where ${\bf e}$ is an error vector, which contains a 1 whenever Alice sent $(r_i,r_i)$ and Bob received $(\bar{r}_i,\bar{r}_i)$. If Alice is honest, a bit flip happens with probability $\varphi^2$, thus the proportion of bit flips among the bits that are not erased is $\tfrac{2n_0\varphi^2}{2n_0(1-\epsilon)}$, yielding, by restricting over the non-erased bits, a binary symmetric channel (BSC) with crossover probability $\tfrac{\varphi^2}{(1-\epsilon)}$.
It is enough that the chosen error capability of the code allows an honest Bob
to recover ${\bf c}_x$. However, by choosing a code which is capacity achieving,
$n_0$ is minimized, and $m$ is maximized, as shown below, while discussing the
optimization of $\Rc_0$, the rate of $P_0$. Polar codes \cite{Arikan} provide examples of capacity achieving codes for the BSC which furthermore come with an efficient decoding algorithm. 
\end{itemize}
\end{paragraph}

\begin{paragraph}{Suppose Bob is dishonest.}
We now check that Bob, even if he is dishonest and deviates
from the protocol by putting indices of erased positions in both sets
$I$ and $J$,
will not recover any information about at least one of the two
secrets, 
that is, he cannot recover information involving both ${\bf c}_x$ and ${\bf c}_y$.
Bob gets roughly {$2n_0(1-\epsilon)$} bits (the rest being erased), he can thus partition the $2n_0$ bits into two groups in any way he wants, where $I$ will have some bits erased, some not, and the rest will be in $J$. 
Bob will receive some ${\bf c}'_x$ and ${\bf c}'_y$, which are noisy
versions of ${\bf c}_x$ and ${\bf c}_y$ respectively, with noise
depending on the choice of $I$ and $J$. These noises can be seen as
the result of a transmission through a channel between Alice and Bob
that behaves as a mixture of an erasure channel and a binary symmetric
channel.
Concretely, the noisiest of the two channels will have an average proportion
of erased symbols that is at least $\epsilon$, and its non-erased symbols
 are all submitted to a binary symmetric channel of
transition probability $\varphi^2/(1-\epsilon)$, as they were before
the partition into $I$ and $J$, since Bob has no way of differentiating
symbols in error from error-free symbols.

For the purposes of the present study, let us call a {\em binary
  symmetric channel with erasures} of parameters $e\in [1,n_0]$ and
$0\leq p<1/2$, where
$e$ is an integer, a
channel which acts on strings of $n_0$ bits in the following way:
\begin{dashlist}
  \item
  it erases $e$ coordinates chosen uniformly among all possible $\binom{n_0}{e}$
  patterns (in the protocol $P_0$, Alice permutes the indices in $I$ and $J$ uniformly at random),
  \item
  it applies a binary symmetric channel of transition probability
  $p$ to the remaining $n_0-e$ symbols.
\end{dashlist}

We will rely on the following result:

\begin{lem}\label{lem:min-entropy}
Let $C$ be a binary code of length $n$ and rate $R$, and let $C_X$ be
a random variable with values in $C$ and uniform distribution.
Let $C_X$ be submitted to a binary symmetric channel with erasures of
parameters $e$ and $p$, and let $Z$ be the output variable. 
Define the conditional min-entropy of $C_X$
given $Z=z$  by $$H_{\infty}(C_X|Z=z)=-\log \max_{c\in C} P(C_X=c
|Z=z).$$ 
Then, for all $\alpha>0$,  with probability that tends to $1$ exponentially in $n$, a
vector $z$ is received such that
$$H_{\infty}(C_X|Z=z) \geq n[R-(1-e/n)(1-h(p))-\alpha].$$
\end{lem}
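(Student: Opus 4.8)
The plan is to pass to Bayes' rule, where the uniform prior makes everything explicit, and then control the resulting quotient on a high‑probability "typical channel'' event. Write a received $z$ as an erasure set $E\subset[n]$ with $|E|=e$ together with the length‑$(n-e)$ string $z_S$ on the surviving positions $S=[n]\setminus E$; the channel law is $P(Z=z\mid C_X=c)=\binom{n}{e}^{-1}p^{\,d_H(z_S,c_S)}(1-p)^{\,(n-e)-d_H(z_S,c_S)}$. Since $C_X$ is uniform on $C$, Bayes' rule gives
$$\max_{c\in C}P(C_X=c\mid Z=z)=\frac{\max_{c\in C}P(Z=z\mid C_X=c)}{\sum_{c'\in C}P(Z=z\mid C_X=c')}=\frac{\binom{n}{e}^{-1}(1-p)^{n-e}\,\rho^{\,d^*(z_S)}}{2^{nR}\,P(Z=z)},$$
where $\rho:=p/(1-p)<1$ and $d^*(z_S):=\min_{c\in C}d_H(z_S,c_S)$. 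So it is enough to find an event of probability $1-2^{-\Omega(n)}$ on which simultaneously (i) $d^*(z_S)\ge(p-\alpha_3)(n-e)$, making the numerator small, and (ii) $P(Z=z)\ge 2^{-H(Z)-n\alpha_2}$ with $H(Z)=\log\binom{n}{e}+H(Z_S\mid E)\le\log\binom{n}{e}+(n-e)$, making the denominator large. On that event, substituting (i)–(ii) and using the identity $\log(1-p)+p\log\rho=-h(p)$ (equivalently $h(p)+\log(1-p)=p\log\frac{1-p}{p}$) — which is exactly what recombines the powers of $\rho$ and $1-p$ — collapses the quotient to $\max_{c}P(C_X=c\mid Z=z)\le 2^{-nR+(n-e)(1-h(p))+n[\alpha_3\log(1/\rho)+\alpha_2]}$, and choosing $\alpha_2,\alpha_3$ small enough that $\alpha_3\log(1/\rho)+\alpha_2\le\alpha$ yields the stated lower bound on $H_\infty(C_X\mid Z=z)$.

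To get (i)–(ii) I would condition on the transmitted codeword $C_X=c_0$, so that $z_S=(c_0)_S\oplus N$ with $N$ a string of i.i.d.\ $\mathrm{Bernoulli}(p)$ bits. A Chernoff bound gives $\mathrm{wt}(N)\le(p+\alpha_1)(n-e)$ off an event of probability $2^{-\Omega(n)}$, whence $d^*(z_S)\le(p+\alpha_1)(n-e)$ automatically; the content of (i) is the matching lower bound, i.e.\ that no \emph{other} codeword is abnormally close to $z_S$. For a fixed $c'\ne c_0$ with $\mathrm{wt}((c_0-c')_S)=\beta(n-e)$ one has $d_H(z_S,c'_S)=\mathrm{wt}((c_0-c')_S\oplus N)$, a sum of independent bits with mean $(n-e)[p+\beta(1-2p)]\ge(n-e)p$ (here $p<\tfrac12$ is used), so $P(d_H(z_S,c'_S)<(p-\alpha_3)(n-e))\le 2^{-\Omega(n\alpha_3^2)}$ by Hoeffding, and a union bound over the $\le 2^{nR}$ codewords gives (i). Statement (ii) is the entropy typicality (AEP) of the channel output $Z$; alternatively one can bypass it by arguing with the equivocation, using $H(C_X\mid Z)=nR-I(C_X;Z)\ge nR-(n-e)(1-h(p))$ from data processing and transferring this \emph{average} bound to a typical $z$ via a bounded‑differences concentration estimate for the map $z\mapsto H(C_X\mid Z=z)$ (whose coordinatewise fluctuation is $O(\log(1/\rho))$).

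The step I expect to be the real obstacle is the union bound in (i): it is comfortable when $R$ is small relative to $\alpha_3^2$, but as $R$ approaches the channel capacity $(1-e/n)(1-h(p))$ the exponentially many competing codewords force one either to track the large‑deviation exponent of $\mathrm{wt}((c_0-c')_S\oplus N)$ carefully, or else to abandon the per‑codeword union bound in favour of a direct lower bound on the normalising sum $\sum_{c'}P(Z=z\mid C_X=c')$ (equivalently on $P(Z=z)$) — which is precisely what the equivocation route buys. Everything else — the Bayes identity, the Chernoff/Hoeffding estimates, and the algebraic collapse via $h(p)+\log(1-p)=p\log\frac{1-p}{p}$ — is routine bookkeeping, and since the complementary event has probability $2^{-\Omega(n)}$ this delivers the asserted exponential convergence.
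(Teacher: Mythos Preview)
Your route via Bayes' rule and the bounds (i), (ii) is genuinely different from the paper's, and the obstacle you flag in (i) is a real gap rather than a technicality. Claim (i) --- that $d^*(z_S)\ge(p-\alpha_3)(n-e)$ with high probability --- is simply false for general codes: with $e=0$ and $C=\FF_2^n$ (so $R=1$) one has $z\in C$ and hence $d^*(z_S)=0$ for every $z$. No sharpening of the large-deviation exponent can repair the union bound here, because the event you are trying to exclude actually occurs. Your equivocation alternative does not prove the lemma either: it lower-bounds the \emph{Shannon} entropy $H(C_X\mid Z=z)$ for typical $z$, whereas the lemma is about the min-entropy $H_\infty$, and since $H_\infty\le H$ in general this inequality points the wrong way.

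The paper sidesteps both issues with a different idea. It first replaces the BSC on the unerased positions by a channel that introduces a \emph{fixed} number $w$ of errors, chosen uniformly among all weight-$w$ patterns. For this modified channel the posterior $P(C_X=c\mid Z=z)$ is uniform on $\{c\in C:d_H(c_S,z_S)=w\}$, so the min-entropy is exactly $\log\deg(z)$, where $\deg(z)$ is the right-degree of $z$ in the bipartite codeword/output graph. Since every codeword has the same left-degree $\binom{n}{e}\binom{n-e}{w}$, a one-line edge-counting (Markov) argument shows that all but a $2^{-\alpha n}$ fraction of edges hit right vertices of degree at least $2^{r-\alpha n}$ --- no union bound over codewords is ever taken. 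Setting $w=p(n-e)$ gives $r\approx n[R-(1-e/n)(1-h(p))]$. The passage from fixed $w$ to binomially distributed $w$ is then handled separately by concentration of the error weight; as your $C=\FF_2^n$ example already hints, this last step is where any residual subtlety lies, but the core counting argument for fixed $w$ is the missing ingredient in your plan.
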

The proof of Lemma~\ref{lem:min-entropy} is given in the Appendix.

In the case under study, the value of $e$  may
vary, but the probability that $e/n_0$ falls significantly below
$\epsilon =2\varphi(1-\varphi)$, i.e. is separated from $\epsilon$ by
a constant, is exponentially small in $n_0$. To obtain a uniformly
distributed secret from the transmitted codeword $\cv_x$
or $\cv_y$ in the protocol $P_0$, it suffices to hash it to a
sufficiently smaller string, which is
exactly the purpose of the multiplication by $\HH_1$. Since the set of
multiplications by $\HH_1$ makes up a universal family of hash
functions, we will invoke the Leftover Hash Lemma \cite[Lemma 4.5.1]{HILL} \cite[Theorem 3]{Bennett} to evaluate how close the protocol $P_0$ is from the ideal scenario
\begin{align}
\text{Either}\quad & H(X|Y,\mathcal{O})=m \label{eq:idealX}\\
\text{Or}\quad     & H(Y|X,\mathcal{O})=m \label{eq:idealY}
\end{align}
where we view the two secrets $x$ and $y$ as uniform and independent random variables $X,Y$ in $\{0,1\}^m$ and $\mathcal{O}$ is what Bob observes during protocol $P_0$.
The nature of protocol $P_0$ is such that $H(X|Y,\mathcal{O})=H(X|\mathcal{O})$, because $X$ and $Y$ are really transmitted
over two independent channels. 
Without loss of generality we assume that $X$ is transmitted over the noisiest of the two channels.
We have the following lower bound on $H(X|\mathcal{O})$:

\begin{theorem}\label{theor:HBob}
Suppose protocol $P_0$ is implemented with some $(n_0,k)$ linear code $C_0$ of
rate $R_0=k/n_0$. For any $\varepsilon>0$, whenever the length $m$ of the secret
satisfies $m\leq
n_0[R_0-(1-\epsilon)(1-h(\tfrac{\varphi^2}{1-\epsilon}))-\varepsilon]$, then $H(X|\mathcal{O})\geq m-f_0(\varepsilon,m)$, for $f_0(\varepsilon,m)$ exponentially small in $m$.
\end{theorem}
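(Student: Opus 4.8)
The plan is to combine Lemma~\ref{lem:min-entropy} with the Leftover Hash Lemma. Recall that in protocol $P_0$ the secret $X$ is obtained from the transmitted codeword $\cv_x\in C_1$ by applying the linear map $\HH_1$, i.e. $X=\HH_1\cv_x$, where $\cv_x$ is chosen uniformly among the codewords of $C_1$ satisfying $\HH_0\cv_x={\bf 0}$; equivalently $\cv_x$ is uniform in a code $C$ of length $n_0$ whose parity-check matrix is $\HH_0$, so $C$ has dimension $k$ and rate $R_0=k/n_0$. Write $C_X$ for this uniform random variable in $C$. Bob's observation $\mathcal{O}$ consists of the partition data $I,J$, the permutations, and the noiseless transmissions $\cv_x+\rv_I$, $\cv_y+\rv_J$; because $X$ and $Y$ travel over two independent channels, $H(X|Y,\mathcal{O})=H(X|\mathcal{O})$, and the only part of $\mathcal{O}$ correlated with $X$ is what Bob learns about $\cv_x$ through the noisy string $\rv'_I$ together with $\cv_x+\rv_I$. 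Since $\rv_I$ is an independent uniform one-time pad, knowing $\cv_x+\rv_I$ and the noisy $\rv'_I=\rv_I+\ev$ is informationally equivalent to observing $\cv_x$ through the additive noise $\ev$, i.e. through a binary symmetric channel with erasures. By the analysis preceding the theorem, on the noisiest of the two channels the number $e$ of erasures satisfies $e/n_0\geq \epsilon-\alpha'$ except with probability exponentially small in $n_0$, and the non-erased symbols pass through a BSC of crossover probability $\varphi^2/(1-\epsilon)$.

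With this reduction in place, I would first apply Lemma~\ref{lem:min-entropy} to $C=C_X$ with parameters $e$ and $p=\varphi^2/(1-\epsilon)$: for any $\alpha>0$, except on an event $\mathcal{B}$ of probability exponentially small in $n_0$, the received $z$ satisfies
\[
H_\infty(C_X\mid Z=z)\ \geq\ n_0\bigl[R_0-(1-e/n_0)(1-h(p))-\alpha\bigr]\ \geq\ n_0\bigl[R_0-(1-\epsilon)(1-h(p))-\alpha''\bigr],
\]
where the last inequality absorbs the fluctuation of $e/n_0$ below $\epsilon$ (since $1-h(p)\geq 0$, a lower bound $e/n_0\geq \epsilon-\alpha'$ only helps). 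Call the right-hand side $\ell$. Next, the family of maps $\cv\mapsto \HH_1\cv$ with $\HH_1$ chosen uniformly at random is a universal family of hash functions onto $\{0,1\}^m$. The Leftover Hash Lemma then gives, conditioned on $Z=z\notin\mathcal{B}$, that the distribution of $(\HH_1, \HH_1 C_X)$ is within statistical distance $\tfrac12\,2^{-(\ell-m)/2}$ of $(\HH_1,U_m)$ with $U_m$ uniform. Provided $m\leq n_0[R_0-(1-\epsilon)(1-h(\varphi^2/(1-\epsilon)))-\varepsilon]\leq \ell$ (choosing $\alpha''<\varepsilon$), this statistical distance is $2^{-\Omega(n_0)}=2^{-\Omega(m)}$.

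Finally I would convert statistical distance to entropy. Conditioned on $Z=z\notin\mathcal{B}$, $X=\HH_1 C_X$ is $\eta_z$-close to uniform on $\{0,1\}^m$ with $\eta_z\leq 2^{-(\varepsilon n_0-o(n_0))/2}$; by a standard continuity-of-entropy estimate (Fano-type, e.g. $|H(X)-m|\leq \eta_z\, m + h(\eta_z)$ when $X$ is $\eta_z$-close to uniform on a set of size $2^m$), this yields $H(X\mid Z=z,\mathcal{O})\geq m-g(\eta_z)$ with $g$ exponentially small in $m$. Averaging over $z$ and accounting for the bad event $\mathcal{B}$ — whose probability is exponentially small and which contributes at most $P(\mathcal{B})\cdot m + h(P(\mathcal{B}))$ to the entropy deficit — gives $H(X\mid\mathcal{O})\geq m-f_0(\varepsilon,m)$ with $f_0$ exponentially small in $m$, which together with $H(X\mid Y,\mathcal{O})=H(X\mid\mathcal{O})$ proves the theorem.

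The main obstacle is bookkeeping rather than conceptual: one must check carefully that the one-time-pad argument really makes Bob's view of $X$ equivalent to a single pass of $C_X$ through a binary symmetric channel with erasures (in particular that the permutations Alice applies make the erasure pattern uniform, as required by Lemma~\ref{lem:min-entropy}), and that all the small quantities — the probability of atypically few erasures, the Leftover Hash slack $2^{-(\ell-m)/2}$, and the entropy-from-statistical-distance loss — can be simultaneously made exponentially small in $m$ under the single hypothesis $m\leq n_0[R_0-(1-\epsilon)(1-h(\varphi^2/(1-\epsilon)))-\varepsilon]$, by choosing the internal slack parameters $\alpha,\alpha',\alpha''$ smaller than $\varepsilon$.
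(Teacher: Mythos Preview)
Your proposal is correct and follows essentially the same route as the paper: apply Lemma~\ref{lem:min-entropy} to lower-bound the min-entropy of $C_X$ given the channel output, then use the Leftover Hash Lemma with the universal family $\cv\mapsto\HH_1\cv$, and finally average over the bad events (atypical erasure count and bad $z$). The only cosmetic difference is that the paper invokes the entropy form of the Leftover Hash Lemma from \cite{Bennett} directly (so no separate statistical-distance-to-entropy conversion is needed) and then uses a short Markov argument to pass from the average over $\HH_1$ to a statement for a fixed $\HH_1$; your statistical-distance formulation followed by a continuity-of-entropy step achieves the same thing.
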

\begin{proof}

What is observed by Bob is a noisy version $z$ of a codeword $c$ sent through a
binary symmetric channel with erasures of parameters $e$ and
$p=\varphi^2/(1-\epsilon)$, with $e/n_0$ arbitrarily close to $\epsilon$.
Lemma~\ref{lem:min-entropy} claims that with probability tending to~$1$
(exponentially in $n_0$, meaning with probability $1-\exp(-n_0)$), Bob observes
$\omega=z$ such that the min-entropy $H_\infty(c|\mathcal{O}=\omega)$ of the
transmitted codeword $c$ is at least $n_0[R_0-(1-e/n_0)(1-h(p))]-\alpha n_0$, with $\alpha$ arbitrarily small. We then invoke Theorem 3
of \cite{Bennett} and the fact that the Renyi entropy is never less than the
min-entropy to claim that, since $X=\bH_1c$, we have
\[
H(X|\mathcal{O}=\omega,\bH_1) = H(\bH_1c|Z=z,\bH_1) \geq m - 2^{m-n_0[R_0-(1-e/n_0)(1-h(p))]+\alpha n_0-\log_2\ln 2}.
\]
Since $m\leq n_0[R_0-(1-\epsilon)(1-h(\tfrac{\varphi^2}{1-\epsilon}))]-\varepsilon$,
then $m-n_0[R_0-(1-e/n_0)(1-h(p))]+\alpha n_0 -\log_2\ln 2 \leq -4\beta n_0$ for some
$\beta=\beta(\varepsilon)>0$, which shows that we are already close to $m$ in a
way which is exponential in $n_0$, given $\bH_1$. Next, we remove the dependency
on $\bH_1$. We just showed that on average over $\bH_1$,
\begin{equation}\label{eq:average}
H(X|\mathcal{O}=\omega,\bH_1) \geq m - 2^{-4\beta n_0}.
\end{equation}
Suppose now that we were to be unlucky and choose $\bH_1$ in the set of
$\mathcal{H}_1$ of ``bad'' matrices $h$ (that may depend on $\omega$)
such that $H(X|\mathcal{O}=\omega,\bH_1=h) \leq m - 2^{-2\beta n_0}$. Since
\begin{eqnarray*}
H(X|\mathcal{O}=\omega,\bH_1) 
&=& \sum_{h\in\mathcal{H}_1\cup \overline{\mathcal{H}}_1} P(\bH_1=h)H(X|\mathcal{O}=\omega,\bH_1=h) \\
&\leq & \sum_{h\in\mathcal{H}_1}P(\bH_1=h)(m - 2^{-2\beta n_0}) +
\sum_{h\in\overline{\mathcal{H}}_1}P(\bH_1=h)m \\
&=& - \sum_{h\in\mathcal{H}_1}P(\bH_1=h) 2^{-2\beta n_0} +
\sum_{h\in\mathcal{H}_1\cup\overline{\mathcal{H}}_1}P(\bH_1=h)m \\
&=& m - 2^{-2\beta n_0} \sum_{h\in\mathcal{H}_1}P(\bH_1=h),
\end{eqnarray*}
we upper bound the quantity $H(X|\mathcal{O}=\omega,\bH_1)$ by 
\[
H(X|\mathcal{O}=\omega,\bH_1) \leq m - 2^{-2\beta n_0} P_u
\]
where $P_u$ is the probability to choose $\bH_1=h$ in $\mathcal{H}_1$. Together
with \eqref{eq:average} the above inequality implies that $P_u \leq 2^{-2\beta n_0} $. Therefore with probability $1-P_u \geq 1-1/2^{2\beta n_0}$ over the choice of the random matrix $\bH_1$, we have
\[
H(X|\mathcal{O} = \omega) \geq m - 2^{-2\beta n_0}.
\]

Now Lemma \ref{lem:min-entropy} does not exclude the existence of a ``bad" event $\omega\in\Omega_1$, 
for which we cannot guarantee \eqref{eq:average}. But we can write
\begin{eqnarray*}
H(X|\mathcal{O}) 
&=& \sum_{\omega\in\Omega_1\cup\overline{\Omega}_1} P(\mathcal{O}=\omega)H(X|\mathcal{O}=\omega) \\
&\geq & \sum_{\omega\in\overline{\Omega}_1}P(\mathcal{O}=\omega)(m - 2^{-2\beta n_0}) \\
&=& (m - 2^{-2\beta n_0})(1-2^{-\gamma n_0})
\end{eqnarray*}
where $2^{-\gamma n_0}$ is the probability of a bad event $\omega$.

We make the final remark that the above computation assumed that $e/n_0$ is
arbitrarily close to $\epsilon$. Of course, the number of erasures can deviate
significantly from the average: but this happens with probability exponentially
small in $n_0$, so that again this rare event can only diminish
$H(X|\mathcal{O})$ by a quantity exponentially small in $n_0$.
\end{proof}

\begin{cor}\label{cor:C0}
If $C_0$ is capacity-achieving on the erasureless channel, meaning the
code $C_0$ has a vanishing decoding error probability for a BSC of parameter
$p=\varphi^2/(1-\epsilon)$ and a rate $R_0$ arbitrarily close to
$1-h(\tfrac{\varphi^2}{1-\epsilon})$, and if $m \leq
n_0\epsilon[1-h(\tfrac{\varphi^2}{1-\epsilon})-\varepsilon]$, then Bob can only
obtain a vanishingly small number of bits of information on one of the two secrets.
\end{cor}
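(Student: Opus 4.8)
The plan is to read off Corollary~\ref{cor:C0} from Theorem~\ref{theor:HBob} by specialising the latter's parameters to the capacity-achieving regime and then rephrasing the resulting conditional-entropy bound as a bound on mutual information. Write $p=\varphi^2/(1-\epsilon)$ and $\rho=1-h(p)$; since $\varphi<1/2$ forces $p<1/2$, one has $\rho>0$. Saying that $C_0$ is capacity-achieving means that for every $\eta>0$ we may take the code $C_0$ of rate $R_0\geq\rho-\eta$ with decoding error probability over the BSC of parameter $p$ tending to $0$. The latter already yields correctness: an honest Bob recovers $\cv_x$, hence the secret $x$ he asked for, with probability tending to $1$.

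For the secrecy part I would first verify that the hypothesis $m\leq n_0\epsilon[\rho-\varepsilon]$ of the corollary implies the hypothesis $m\leq n_0\big[R_0-(1-\epsilon)(1-h(p))-\varepsilon'\big]$ of Theorem~\ref{theor:HBob} for some $\varepsilon'>0$. This is a short computation: $R_0-(1-\epsilon)(1-h(p))\geq(\rho-\eta)-(1-\epsilon)\rho=\epsilon\rho-\eta$, so choosing $\eta=\varepsilon'=\epsilon\varepsilon/2$ gives $n_0\epsilon[\rho-\varepsilon]=n_0[\epsilon\rho-\epsilon\varepsilon]\leq n_0[R_0-(1-\epsilon)(1-h(p))-\varepsilon']$. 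Theorem~\ref{theor:HBob} then yields $H(X\mid\mathcal{O})\geq m-f_0(\varepsilon',m)$ with $f_0$ exponentially small in $m$.

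It remains to convert this into the stated conclusion. As observed just before Theorem~\ref{theor:HBob}, because $\xv$ and $\yv$ are transmitted over two independent virtual channels in $P_0$ we have $H(X\mid Y,\mathcal{O})=H(X\mid\mathcal{O})$, and without loss of generality $X$ is the secret routed over the noisier of the two channels (erasure fraction at least $\epsilon$), which is precisely the situation the theorem controls no matter how a dishonest Bob splits the index set. Since $X$ is uniform on $\{0,1\}^m$ we have $H(X)=m$, so the information Bob retains about this secret is $I(X;Y,\mathcal{O})=m-H(X\mid Y,\mathcal{O})=m-H(X\mid\mathcal{O})\leq f_0(\varepsilon',m)$. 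In the regime of interest $m$ is a fixed positive fraction of $n_0$, so $m\to\infty$, $f_0\to0$, and Bob learns only a vanishing number of bits about at least one of the two secrets.

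There is no genuinely hard step here; the corollary is essentially Theorem~\ref{theor:HBob} evaluated at channel capacity. The only thing needing attention is the bookkeeping in the second paragraph: one must make ``arbitrarily close'' precise so that the slack $R_0-(1-\epsilon)(1-h(p))-m/n_0$ stays bounded below by a fixed positive constant while $m/n_0$ itself is bounded below by a positive constant (guaranteeing $m\to\infty$ and hence that the exponentially small $f_0$ actually vanishes). Once the constants are chosen consistently, the statement follows.
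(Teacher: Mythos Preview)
Your proposal is correct and matches the paper's intent: the paper states Corollary~\ref{cor:C0} immediately after Theorem~\ref{theor:HBob} with no separate proof, treating it as the obvious specialisation to $R_0\approx 1-h(\varphi^2/(1-\epsilon))$. Your bookkeeping with $\eta=\varepsilon'=\epsilon\varepsilon/2$ is exactly the kind of parameter check the paper leaves to the reader, and your conversion to mutual information via $H(X\mid Y,\mathcal{O})=H(X\mid\mathcal{O})$ is the intended reading of ``vanishingly small number of bits''.
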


\end{paragraph}

\begin{paragraph}{Optimization of the length $m$ and of the rate $R_0$.}
Finally, $m$ is maximized by maximizing $R_0$, that is by having Alice use a
capacity achieving code for the relevant BSC channel, for which 
Corollary~\ref{cor:C0} has just told us that we may set
\[
m=n_0\epsilon(1-h(\tfrac{\varphi^2}{1-\epsilon})-\varepsilon),
\]
for an arbitrarily small positive $\varepsilon$. This gives us
an oblivious transfer rate (see Definition~\ref{def:rate}) arbitrarily close to
\[
{
\Rc_0=\frac{m}{2n_0}=\frac{n_0 \epsilon(1-h(\tfrac{\varphi^2}{1-\epsilon}))}{2n_0}=
\varphi(1-\varphi)\left(1-h\left(\frac{\varphi^2}{1-2\varphi(1-\varphi)}\right)\right)}.
\]
This is the lower bound on the oblivious transfer capacity found by Alshwede and
Csisz\'ar in \cite[Example 1]{Ahlswede}. What this shows is therefore that 
we incur no penalty on the achievable oblivious transfer rate by assuming a possibly malicious Bob
as opposed to the honest but curious Bob of \cite{Ahlswede}.

The optimal value of $m$ is obtained when $\varphi \approx 0.198$, $\epsilon \approx 0.31$, for which $\epsilon(1-h(\tfrac{\varphi^2}{1-\epsilon}))\approx 0.216$ (see Figure \ref{fig:ploteps}) and
\[
\Rc_0\approx\frac{0.216 n_0}{2n_0}=0.108.
\]
\end{paragraph}

\begin{figure}
\centering
\includegraphics[scale=0.5]{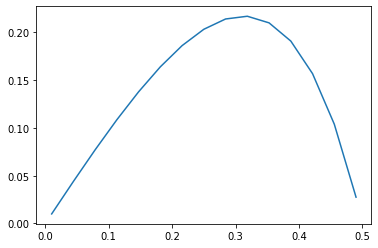}
\caption{\label{fig:ploteps} $\epsilon(1-h(\varphi^2/(1-\epsilon)))$ is shown as a function of the erasure probability $\epsilon$ of the imperfect BEC. 
}
\end{figure}

{\bf Suppose Alice is dishonest.} She might send a pair of the form
$(\bar{r}_i,r_i)$. Now Bob will receive $(\bar{r}_i,r_i)$ or
$(r_i,\bar{r}_i)$, that is an erasure, with probability
{$\varphi^2+(1-\varphi)^2=1-\epsilon > 1/2$}, in which case he will put the
index $i$ in the more ``noisy'' set $J$. Since it is more likely that a symbol of the form
$(r_i,\bar{r}_i)$ stays an erasure (rather than being changed into the 
valid symbol $(r_i,r_i)$ or $(\bar{r}_i,\bar{r}_i)$), by tracking which set contains the most indices on
which she cheated, she can guess which is more likely to be $I$ or
$J$.
For this reason, the protocol $P_0$ is only valid in a semi-honest
model where Alice is assumed not to deviate from the protocol.

To obtain a protocol valid against a malicious Alice that will
send falsely duplicated pairs of the form $(\bar{r}_i,r_i)$, we may
repeat $n=n_0^2$ times the protocol $P_0$, as in \cite{Crepeau} and as sketched in
Section~\ref{sec:overview}. 
If Alice is honest, the first time, she sends $(r_{1,1}r_{1,1}),\ldots,(r_{1,2n_0}r_{1,2n_0})$, the second time, she sends $(r_{2,1}r_{2,1}),\ldots,(r_{2,2n_0}r_{2,2n_0})$, $\ldots$. 
Let $Z$ be the random variable counting the number of valid
(i.e. non-erased) bits that Bob should receive.
It is binomially distributed, with mean {$E[Z]=2nn_0 (1-\epsilon)$}, and standard deviation $\sigma=\sqrt{2nn_0\epsilon(1-\epsilon)}$.

Suppose now Alice is dishonest, and that she cheats by sending $M$
falsely duplicated pairs out of the total $2nn_0$ transmitted pairs for all
the $n$ iterations of protocol $P_0$. Every time she cheats and
sends $(\bar{r}_{i,j}r_{i,j})$, Bob will declare an erasure with
probability {$1-\epsilon$}, therefore 
 the average number of valid bits that Bob will get is 
\begin{equation}\label{eq:E[Z]}
E[Z]= (2nn_0-M)(1-\epsilon) + M \epsilon = 2nn_0 (1-\epsilon) - M(1-2\epsilon),
\end{equation}
with $\epsilon<1/2$. If Alice cheats at least once on every one of the
$n$ instances of the protocol $P_0$ (so that $M\geq n$), then
for $n=n_0^2$, the typical value of $Z$ will
deviate from $2nn_0 (1-\epsilon)$ by a quantity that is much too close
to $n_0^2$ than the standard deviation of $Z$, that behaves as
$n_0^{3/2}$, allows. This tells Bob that Alice is cheating.

A bit more specifically, Bob will set a {\em threshold} $\tau$
to be equal to 
\begin{equation}\label{eq:threshold}
\tau = 2nn_0\left(1-\epsilon - \eta\right)
\quad
\text{with}
\quad \eta=\frac{1}{4n_0}(1-2\epsilon),
\end{equation}
which is exactly midway between the expected number $2nn_0(1-\epsilon)$
of unerased symbols he
should receive if Alice does not try to cheat and the expected number
\eqref{eq:E[Z]} of
unerased symbols he should receive if Alice cheats sufficiently many times
($M=n$) to access his secret. Bob will declare that Alice cheats if
the  number of unerased symbols that he receives falls below the threshold
$\tau$. The Chernov-Hoeffding inequality tells us that the probability that
Alice succeeds in cheating without being accused scales as $\exp(-\eta^22nn_0)$, and
similarly
that the probability that Bob wrongly accuses Alice of cheating
is also $\exp(-\eta^22nn_0)$.
With $n=n_0^2$, we have $\exp(-\eta^22nn_0)=\exp(-n_0)$.

More generally, we note that Alice's cheating will be almost surely
noticed whenever the number $M$ of corrupted bits she sends satisfies
\begin{equation}
  \label{eq:Acheat}
  M\gg \sqrt{nn_0}.
\end{equation}

Conversely, whenever the order of magnitude of $M$ stays below $\sqrt{nn_0}$, she
gets away with her behaviour.
We notice in particular that to prevent Alice from cheating on at
least one bit for every instance of $P_0$, the number $n$ of times
the protocol $P_0$ must be repeated has to satisfy $n\gg n_0$.

In the next section we further exploit this strategy of repeating $n$ times
$P_0$ to devise an efficient oblivious transfer protocol secure
against a cheating Alice.

%
%
\section{A Positive Rate Binary Oblivious Transfer Protocol}
\label{sec:posOT}
\subsection{A protocol that defeats Alice's cheating strategy}
\label{sec:defeatA}
We recall the definition of a Schur product of codes over the finite
field $\FF_q$, for $q$ a prime power. Schur products were possibly explicitly first used in Coding Theory for decoding applications \cite{Pellikaan} and later came
under attention in cryptographic contexts, in part because of their relevance to
secret sharing and multiparty computation. For details on 
applications see the introduction of \cite{CCMZ} and for a survey of their properties see \cite{Hugues2}.
\begin{defn}\label{def:schur}
Given a $q$-ary linear code $C$ of length $n$, for $q$ a prime power,
the Schur product (or square) of $C$, denoted $\hat{C}$, is defined as the linear span of
all componentwise products $c*c'$ of code vectors $c,c'$ of $C$, i.e.
$$\hat{C} = \langle c*c', c,c'\in C\rangle$$
with
\[
c*c'= (c_1c_1',\ldots,c_nc_n').
\]
\end{defn}
We will denote the length, the dimension, and the minimum Hamming distance 
of the code $C$ and the code $\hat{C}$ respectively by $[n,r,d]$ and $[n,\hat{r},\hat{d}]$. 

For the moment we restrict ourselves to $q=2$.
Let now ${\bf H}$ be an $r\times n$ binary matrix of rank $r$ whose $i$th row is denoted by $H_i$, so that
\[
{\bf H}=\begin{bmatrix}
H_1 \\
\vdots \\
H_r
\end{bmatrix}
\]
satisfying 
\begin{equation}\label{eq:delta}
H_i \transpose{H_j}  =\delta_{ij},~i,j=1,\ldots,n. 
\end{equation}

The protocol $P_1$ described below provides an oblivious transfer between Alice
and Bob, assuming this time that Bob is honest (but not Alice). We will from now
on use the letter $m$ to denote the secret size in protocol $P_0$: the secret
size in protocol $P_1$ will be equal to $rm$, for an integer $r$ equal to the
dimension of a binary linear code $C$ that we now introduce.

\begin{framed}
{\bf Protocol $P_1$.}
Alice and Bob agree on an $r\times n$ binary matrix ${\bf H}$
satisfying (\ref{eq:delta}), which forms the generator matrix of an
$[n,r,d]$ code $C$. The dimension $r$ of $C$ 
and the minimum distance $\hat{d}$ of the square $\hat{C}$ should both
be linear in $n$.

Alice has two secrets 
\[
{\sv}=\begin{bmatrix}{s}_1\\ \vdots\\ {s}_{r}\end{bmatrix},
~{\tv}=\begin{bmatrix}{t}_1\\\vdots\\{t}_{r}\end{bmatrix} 
\]
with coefficients $s_i,t_i$ in $\FF_2^{m}$.
\begin{enumerate}
\item
\label{item:y}
Alice then picks uniformly at random a vector
$\xv=[x_1,\ldots,x_n]$, $x_i\in\FF_2^{m}$, such that
\[
{\bf H}\transpose{\xv}={\sv}
\]
and computes 
\begin{equation}\label{eq:y=}
\yv = \xv+\sum_{j=1}^r (s_j+t_j)H_j=[y_1,\ldots,y_n],~y_i\in\FF_2^{m},
\end{equation}
such that ${\bf H}\transpose{\yv}={\tv}$.
\item
\label{item:u}
Bob computes a binary vector ${\bf u}=[u_1,\ldots,u_n]\in \FF_2^n$ which is orthogonal to $\hat{C}$.
\item\label{item:xy}
If Bob wants the secret $\sv$ (respectively $\tv$), then
for every coefficient $u_{\ell}$ of ${\bf u}$, Bob asks Alice through the
protocol $P_0$ for the string
\begin{itemize}
\item $x_\ell\in\FF_2^{m}$ (respectively $y_\ell$) if $u_\ell=0$,
\item $y_\ell\in\FF_2^{m}$ (respectively $x_\ell$) if $u_\ell=1$.
\end{itemize}
\item
\label{item:vl}
After $n$ rounds of the protocol $P_0$, Alice has sent Bob the
requested $n$-tuple $\vv=[v_1,\ldots,v_n]$ which may be expressed as
\begin{align}\label{eq:v}
\vv &= \xv*(1+\uv) + \yv*\uv \quad \text{if Bob requested $\sv$}\\ \label{eq:v2}
\vv &= \xv*\uv  + \yv*(1+\uv) \quad \text{if Bob requested $\tv$}
\end{align}
\item\label{item:Hv}
Once Bob gets $\vv$, he computes ${\bf H}\transpose{\vv}$ to recover $\sv$ (or $\tv$).
\end{enumerate}
\end{framed}

We first remark that in the simple case when the matrix $\HH$ is a
single row made up of the all-one vector, $\HH =[1,1,\ldots, 1]$, then
the protocol $P_1$ reduces to the string version of Cr\'epeau's
oblivious transfer protocol sketched in Section~\ref{sec:overview}.
We now check that in the general case, protocol $P_1$ does what is required
of it when the players do not try to deviate. We need to check that
Bob indeed recovers $\sv$ or $\tv$ and obtains no information on the
other secret.

\begin{paragraph}{Suppose both Alice and Bob are honest.}
\begin{itemize}
\item
As needed in Step \ref{item:y} of $P_1$, the vector $\yv=\xv+\sum_{j=1}^r (s_j+t_j)H_j=[y_1,\ldots,y_n]$, $y_i\in\FF_2^{m}$, satisfies 
\[
{\bf H}\transpose{\yv}=\tv.
\]
Indeed, from (\ref{eq:delta}), ${\bf H}\transpose{H_j}=\ev_j$, where
$\ev_j$ is the weight 1 column vector with~1 at the $j$th position, and  
\[
{\bf H}\transpose{\yv}= {\bf H}\transpose{\xv}+\sum_{j=1}^r (s_j+t_j){\bf H}\transpose{H_j}
                 = \sv + \sum_{j=1}^r (s_j+t_j)\ev_j
                 = \sv +(\sv+\tv) = \tv.
\]
\item
To show that Bob can recover $\sv$ or $\tv$ by computing ${\bf H}\transpose{\vv}$ in Step \ref{item:Hv} of $P_1$, we first remark that $\vv$ can be expressed as a Schur product
\[
\vv=\xv+{\bf u}*(\xv+\yv)=[x_1+u_1(x_1+y_1),\ldots,x_n+u_n(x_n+y_n)],
\]
if Bob wants $\sv$, or 
\[
\vv=\xv+({\bf u}+{\bf 1})*(\xv+\yv)=[x_1+(u_1+1)(x_1+y_1),\ldots,x_n+(u_n+1)(x_n+y_n)],
\]
if Bob wants $\tv$, according to \eqref{eq:v} and \eqref{eq:v2}. Now Bob gets $\vv$, and computes 
\[
{\bf H}\transpose{\vv}=
\sv+{\bf H}\transpose{({\bf u}*(\xv+\yv))}
=
\sv+{\bf H} \transpose{({\bf u}*(\sum_{j=1}^r (s_j+t_j)H_j))},
\]
since $\yv = \xv+\sum_{j=1}^r (s_j+t_j)H_j$, and
\[
{\bf H} \transpose{({\bf u}*(\sum_{j=1}^r (s_j+t_j)H_j))}
= {\bf H} \transpose{(\sum_{j=1}^r (s_j+t_j)({\bf u}*H_j))}\\
 = \sum_{j=1}^r (s_j+t_j){\bf H} \transpose{({\bf u}*H_j)}.
\]
Now the $i$th row of the vector ${\bf H} \transpose{({\bf u} * H_j)}$ is
\[
H_i\transpose{({\bf u}*H_j)} = \sum_{k=1}^n H_{ik}u_kH_{jk}= {\bf u} \transpose{(H_i* H_j)}.
\]
Since ${\bf u}$ is orthogonal to $\hat{C}$, ${\bf u} \transpose{(H_i* H_j)}=0$, and as desired,  Bob gets $\sv$. Now if ${\bf u}+{\bf 1}$ is used instead of ${\bf u}$, we have 
\[
{\bf H}\transpose{(({\bf u}+{\bf 1})*(\xv+\yv))} ={\bf H}\transpose{({\bf u}*(\xv+\yv)+\xv+\yv)} = {\bf H}\transpose{(\xv+\yv)} 
\] 
and Bob gets $\tv$.
\end{itemize}

Additionally, 
we remark that Bob has obtained $\sv$ (say) and $\vv_1\in(\FF_2^{m})^n$ 
given by \eqref{eq:v} and that protocol $P_0$ guarantees that he
essentially has no information on the coefficients of the other
$n$-tuple $\vv_2$ given by \eqref{eq:v2}. 
More precisely, the argument above shows that in an idealised version of protocol $P_0$, where \eqref{eq:idealX} and \eqref{eq:idealY} hold,  Bob has no information from $\vv_2$
in the sense that, given $\vv_1$ and $\sv$,  all
possible values for $\tv=\HH\transpose{\vv_2}$ are equally likely, in other words
Bob has no information on $\tv$.
Let us now prove that Bob has almost no knowledge on $\tv$, even when given
$\vv_1$ (which implies knowledge of $\sv$), and the actual output of protocol
$P_0$. Given $\vv_1$, \eqref{eq:y=} proves that, for every fixed vector
${\mathbf u}$ chosen by Bob, the vector $\vv_2$ lives in a code whose codewords
are in one-to-one correspondence with the values of $\tv$. 
Let us denote by $V_2$ the random variable equal to $v_2$ with distribution
conditioned by the knowledge of $\vv_1$, (which does not depend on the actual
value of $\vv_1$).
We have that
protocol $P_0$ transforms every coordinate $(V_2)_\ell$ of $V_2$, $\ell=1\ldots
n$, into an $n_0$-tuple $Z$ of $\{0,1,*\}^{n_0}$ (where $*$ denotes an erasure), in
a way that is memoryless and without feedback. In other words, the distribution
of $Z_\ell$ conditional on $(V_2)_\ell$ is the same as the distribution of
$Z_\ell$
conditional on $(V_2)_1,\ldots ,(V_2)_\ell$, and the distribution of
$(V_2)_\ell$ conditional on $(V_2)_1,\ldots (V_2)_{\ell-1}$ {\em and} $Z_1,\ldots
,Z_{\ell-1}$ is the same  as the distribution of $(V_2)_\ell$ conditional on 
$(V_2)_1,\ldots (V_2)_{\ell-1}$ alone. These properties are well-known (e.g.
\cite[Ch. 7]{CoverThomas}) to imply that
\[
I(V_2,Z)\leq \sum_{\ell=1}^nI((V_2)_\ell,Z_\ell).
\]
From Theorem~\ref{theor:HBob} we have $I((V_2)_\ell,Z_\ell)\leq
f_0(m,\varepsilon)$, from which we get, since $H(V_2)=H(T)=rm$,
\[
H(T|\mathcal{O})\geq H(V_2|\mathcal{O}) \geq rm-nf_0(m,\varepsilon),
\]
where $T$ is $\tv$ viewed as a random variable with uniform distribution
and $\mathcal{O}$ is Bob's view of the whole protocol. 
In other words, the amount of information leaked in the whole process is at most
$nf_0(m,\varepsilon)$.
We have thus proved the following:

\begin{cor}\label{cor:P1}
Suppose that protocol $P_1$ is implemented using an $[n,r,d]$ code $C$
satisfying the requirements, and used by an honest Alice who owns two secrets $\sv$ and $\tv$ of length $r$ (linear in $n$), i.e. in $(\FF_2^{m})^r$, and an honest Bob. Then 
\[
H(T|S,\mathcal{O}) \geq rm -nf_0(m,\varepsilon),
\]
where $\varepsilon$ and $f_0(m,\varepsilon)$ are as in
Theorem~\ref{theor:HBob} and in particular $f_0(m,\varepsilon)$ is exponentially small in $m$.
\end{cor}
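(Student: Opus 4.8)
The plan is to carry out the information-theoretic reduction prepared in the discussion above and then make the single-letterisation step rigorous. Assume without loss of generality that Bob asks for $\sv$. By the correctness analysis above he recovers $\sv$ and (with probability tending to $1$) the vector $\vv_1$ of \eqref{eq:v}; since further conditioning only decreases entropy, it suffices to lower-bound $H(T\mid S,\vv_1,\mathcal{O})$, where $\mathcal{O}$ also contains Bob's own randomness, in particular $\uv$, and the transcripts $Z_1,\dots,Z_n$ of the $n$ runs of $P_0$. In the $\ell$-th run Alice's two $P_0$-secrets are $x_\ell,y_\ell$; Bob requests the one equal to $(\vv_1)_\ell$ and routes it through the erasure-free virtual channel, so $(\vv_2)_\ell$ (from \eqref{eq:v2}) is the unrequested secret and travels over the noisier channel. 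The first step is the identity obtained by adding \eqref{eq:v} and \eqref{eq:v2}: $\vv_1+\vv_2=\xv+\yv=\sum_{j=1}^r(s_j+t_j)H_j$, which is independent of $\uv$. So once $\vv_1$ and $\sv$ are fixed, $\tv\mapsto\vv_2$ is affine and injective, the injectivity being precisely $\rank\HH=r$; hence $T$ and $V_2$ determine one another given $(S,\uv,\vv_1)$, and $H(T\mid S,\vv_1,\mathcal{O})=H(V_2\mid S,\vv_1,\mathcal{O})$.

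The second step is to show that $\vv_1$ by itself leaks nothing about $\tv$, i.e.\ $H(V_2\mid S,\uv,\vv_1)=rm$. Writing $\vv_1=\xv+\uv*\av$ with $\av=\sum_{j=1}^r(s_j+t_j)H_j$, as $\xv$ ranges uniformly over $\{\xv:\HH\transpose{\xv}=\sv\}$ the vector $\vv_1$ ranges uniformly over $\{\xv:\HH\transpose{\xv}=\sv\}+\uv*\av$; this set does not depend on $\tv$ because $\uv*H_j\in C^{\perp}$ for each $j$, which holds since $(\uv*H_j)\transpose{H_i}=\uv\transpose{(H_i*H_j)}=0$ for all $i$, using $H_i*H_j\in\hat C$ and $\uv\perp\hat C$. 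As $T$ is uniform and independent of $(S,\uv)$, it follows that $H(T\mid S,\uv,\vv_1)=rm$, hence $H(V_2\mid S,\uv,\vv_1)=rm$.

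The third step bounds $I(V_2;Z\mid S,\uv,\vv_1)$, where $Z=(Z_1,\dots,Z_n)$ and Bob's remaining randomness drops out. Conditioned on $(S,\uv,\vv_1)$, the channel $V_2\to Z$ is memoryless and without feedback: the $n$ copies of $P_0$ use independent channel noise and fresh randomness, and Alice commits to all the inputs $(\vv_1)_\ell,(\vv_2)_\ell$ before any $Z_\ell$ is produced. The standard bound for such channels (e.g.\ \cite[Ch.~7]{CoverThomas}) gives $I(V_2;Z\mid S,\uv,\vv_1)\le\sum_{\ell=1}^nI\bigl((V_2)_\ell;Z_\ell\mid S,\uv,\vv_1\bigr)$. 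In each term $(V_2)_\ell$ is the unrequested $m$-bit secret of a single run of $P_0$, and is uniform on $\FF_2^m$ when the $\ell$-th column of $\HH$ is nonzero and constant (so the term vanishes) otherwise, so Theorem~\ref{theor:HBob} applies and yields $I\bigl((V_2)_\ell;Z_\ell\mid\cdots\bigr)\le f_0(m,\varepsilon)$. Combining the three steps, $H(T\mid S,\mathcal{O})\ge H(V_2\mid S,\vv_1,\mathcal{O})=rm-I(V_2;Z\mid S,\uv,\vv_1)\ge rm-nf_0(m,\varepsilon)$.

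I expect the main obstacle to be the third step: certifying that stacking $n$ runs of the non-ideal $P_0$ merely sums the per-run leakages. This requires laying out the memoryless-without-feedback structure cleanly enough that the single-letter mutual-information inequality is legitimate, and checking that the hypotheses of Theorem~\ref{theor:HBob} (uniform, independent $P_0$-secrets) really hold for each coordinate pair $((\vv_1)_\ell,(\vv_2)_\ell)$ under the conditioning. The algebraic point $\uv*H_j\in C^{\perp}$ in the second step is exactly where the Schur-orthogonality $\uv\perp\hat C$ of the code $C$ is essential, but it is a one-line verification.
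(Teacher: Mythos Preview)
Your proposal is correct and follows essentially the same approach as the paper: reduce from $T$ to $V_2$ via the affine bijection conditioned on $(\sv,\uv,\vv_1)$, observe that $\vv_1$ alone carries no information on $\tv$ so $H(V_2\mid S,\uv,\vv_1)=rm$, and then single-letterise $I(V_2;Z)$ using the memoryless-without-feedback structure together with Theorem~\ref{theor:HBob}. Your version is in fact slightly more explicit than the paper's---you spell out the Schur-orthogonality argument $\uv*H_j\in C^\perp$ behind the second step and check the per-coordinate uniformity of $(V_2)_\ell$ needed for Theorem~\ref{theor:HBob}---but the structure is the same.
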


\end{paragraph}

\begin{paragraph}{Suppose Alice is dishonest.} In Step \ref{item:xy}
  of $P_1$, Bob asks Alice for either $x_\ell$ or $y_\ell$ via 
  protocol $P_0$. If Alice is honest, she has no information on
  whether Bob is asking for $x_\ell$ or for $y_\ell$.
Now  we know that Alice can cheat in protocol $P_0$, and might guess
whether Bob is asking for $x_\ell$ or $y_\ell$: however, among the $n$
iterations of $P_0$, Alice can only cheat up to $M$ times while staying under
the radar, as long as (from (\ref{eq:Acheat}))
 $M$ stays below a linear function of $\sqrt{({\rm \#~channel~uses~for~}P_0)n}$.
Keeping the notation of Section \ref{sec:P0}, take $n$ linear in
$n_0^{2}$ (say), so that $M$ cannot exceed a quantity linear in
$\sqrt{n_0n}=n_0^{3/2}=n^{3/4}$.

Obtaining information on whether Bob asks for $x_\ell$ or for $y_\ell$
is equivalent to obtaining information on the $\ell$-th coefficient
$u_\ell$ of the vector $\uv=(u_1,\ldots ,u_n)$.
Now $\uv$ is randomly chosen in $\hat{C}^\perp$, therefore 
the $M\approx n^{3/4}$ coefficients seen by Alice will be distributed
uniformly at random in $\{0,1\}^M$ as long as $M$  is less or equal to
the dual minimum Hamming distance of $\hat{C}^\perp$ (see \cite{MS},
Ch.5. $\S$5. Theorem 8), which is the minimum Hamming distance of
$\hat{C}$ set to be linear in $n$, assuming for the moment that
such codes exist. Therefore,
Alice cannot differentiate with only $M$ values of $\uv$ whether Bob
is asking for $\sv$ of for $\tv$: Alice's cheating strategy is foiled.

To be more specific, Alice can only gain something from her cheating attempt if
she cheats on $M$ instances of $P_0$ with $M$ exceeding the Hamming distance of
$\hat{C}$, i.e. $M\geq cn$ for some constant $c$. As discussed at the end of
Section~\ref{sec:P0}, Bob will set a threshold $\tau$ as in
\eqref{eq:threshold}, with the value $\eta$ being adjusted to
$\eta = \tfrac{c}{4n_0}(1-2\epsilon)$, so that $\tau$ sits exactly between the
expected number of unerased symbols he should receive when Alice does not try to
cheat, and the expected number of symbols he will receive when Alice chooses
$M=cn$. Again, the probability that either Alice cheats successfully without
being caught and the probability that Bob wrongly accuses Alice of cheating 
both scale like $\exp(-n_0)$.

We are left to show that codes $C$ with all the required properties
exist. The code $C$ should have positive rate, i.e. its dimension $r$
should be a linear function of $n$, so that the oblivious transfer
protocol has positive rate. As we have just seen, the minimum distance $\hat{d}$
should be large enough. We note that the protocol would still work
with a code $C$ such that $\hat{d}$ is $o(n)$.
However, whatever the value of $n$ viewed as a function of $n_0$,
we will always need
$\hat{d}\gg n^{1/2}$ which exceeds what one obtains with
straightforward constructions. In 
 \cite{Hugues}, H. Randriambololona showed the existence of
 asymptotically good Schur codes, that is, with both dimension $r$ and
 product minimum Hamming distance $\hat{d}$ linear in $n$. These codes
 will therefore suit our purposes. To be complete, we just need to
 show that we may incorporate the extra requirement
 \eqref{eq:delta}. We do this below.
\end{paragraph}

\begin{paragraph}{Existence of a suitable code.}
We will show that punctured subcodes of the codes of \cite{Hugues}
satisfy all the requirements of protocol $P_1$. Recall that if
$C\subset \FF_q^n$ is a linear code, and if $I\subset\{1,\ldots, n\}$
is a subset of coordinate positions, then the punctured code $C^I$ on
the subset $I$ is the set of vectors of length $n-|I|$
$$\xv^I := (x_i)_{i\in\{1,\ldots, n\}\setminus I}$$
obtained from all codewords of $\xv =(x_1,\ldots ,x_n)$ of $C$.
If $d$ is the minimum distance of $C$, and if the number $|I|$ of
punctured positions is $<d$, then the dimension of the punctured code
$C^I$ equals the dimension of $C$, and the minimum distance of $C^I$ is
at least $d-|I|$. We rely on the following lemma, that we state in a
general $q$-ary case since we shall require it in non-binary form in
the next section. 
Let us say that the vectors $H_1,H_2,\ldots ,H_r$ in $\FF_q^n$ make up
an {\em orthonormal basis} of $C$ if they satisfy \eqref{eq:delta}.
We shall use the notation $\scalprod{A}{B}$ for the scalar product
$A\transpose{B}$ of vectors $A$ and $B$. 

\begin{lem}\label{lem:orthog}
Let $q$ be a power of $2$ and let
$\FF_q$ be the associated finite field. Let
$C\subset \FF_q^n$ be a linear code of dimension $r$ and minimum
distance $d>r$. Then there exists a subset $I\subset \{1,\ldots ,n\}$
of at most $r$ coordinate positions, such that puncturing $C$ on the
set $I$ yields a code of dimension $r$ that has an orthonormal basis.
\end{lem}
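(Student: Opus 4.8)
The plan is to reduce the lemma to a statement about the standard symmetric bilinear form $\scalprod{x}{y}=x\transpose{y}$ on $\FF_q^{n}$, and then to appeal to the classification of such forms in characteristic~$2$. Concretely, I would look for a set $I$ with $|I|\le r$ such that the form induced on the punctured code $C^{I}$ is \emph{non-degenerate} and \emph{non-alternating} (that is, $C^{I}$ contains a vector $c$ with $\scalprod{c}{c}\ne 0$). Note first that, since $|I|\le r<d$, no nonzero codeword is supported inside $I$, so the restriction map $C\to\FF_q^{\,n-|I|}$ is injective and $C^{I}$ has dimension $r$, as demanded; the same remark applies at every intermediate stage of the construction. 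Once the form on $C^{I}$ has been made non-degenerate and non-alternating, the lemma follows from the fact that over a perfect field of characteristic~$2$ every non-degenerate non-alternating symmetric bilinear form is isometric to the standard dot product, hence has an orthonormal basis; this is classical, but in any case it follows from a short induction — choose $v$ with $\scalprod{v}{v}\ne0$, rescale it so $\scalprod{v}{v}=1$ (possible because $q$ is a power of $2$ and Frobenius is onto), split off $\langle v\rangle$, and if the orthogonal complement happens to be alternating dispose of its hyperbolic planes one at a time via the triples $v+f_{1},\,v+f_{2},\,v+f_{1}+f_{2}$.

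To achieve non-degeneracy I would use a rank-one puncturing argument. Let $G$ be a generator matrix of the code currently at hand (still of dimension $r$) and $M=G\transpose{G}$ the Gram matrix of $\scalprod{\cdot}{\cdot}$ in the corresponding basis; puncturing a coordinate $i$ replaces $M$ by $M+g_{i}\transpose{g_{i}}$, where $g_{i}\in\FF_q^{r}$ is the $i$-th column of $G$. If $\rank M<r$ then $\rank M<\dim\langle g_{1},\dots,g_{n}\rangle=r$, so some column $g_{i}$ lies outside the column space of $M$, and for that coordinate the rank-one update raises $\rank\bigl(M+g_{i}\transpose{g_{i}}\bigr)$ by exactly one. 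Iterating, after at most $r-\rank\bigl(G\transpose{G}\bigr)\le r$ punctures the Gram matrix is invertible, i.e.\ the induced form on the punctured code is non-degenerate.

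The delicate point — and the main obstacle — is to secure non-alternation within this same budget of $r$ punctures, and I would split into two cases. If $C$ is not self-orthogonal, i.e.\ $\rank\bigl(G\transpose{G}\bigr)\ge1$, then the procedure above uses at most $r-1$ punctures and there is room for one more repair puncture: if the invertible Gram matrix $\overline M$ reached is alternating (zero diagonal), puncture any coordinate whose column $\overline g$ is nonzero; then $\overline M+\overline g\transpose{\overline g}$ has the nonzero diagonal entry $\overline g_{j}^{\,2}$ for some $j$, so it is non-alternating, while it remains invertible because $\det\bigl(\overline M+\overline g\transpose{\overline g}\bigr)=\det(\overline M)\bigl(1+\transpose{\overline g}\,\overline M^{-1}\overline g\bigr)=\det(\overline M)$, using that in characteristic~$2$ the inverse of an invertible alternating matrix is again alternating, so $\transpose{\overline g}\,\overline M^{-1}\overline g=0$. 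If instead $C$ is self-orthogonal, $G\transpose{G}=0$, then all $r$ punctures are consumed merely to reach non-degeneracy and there is no slack; here $I$ must be taken to be an information set of $C$ — so $|I|=r$, the $r\times r$ submatrix $G_{I}$ is invertible, and the Gram matrix of $C^{I}$ equals $G_{I}\transpose{G_{I}}$, hence is invertible — chosen moreover so that $\mathbf 1_{I}\notin C^{\perp}$, equivalently $\sum_{i\in I}c_{i}\ne0$ for some codeword $c$, which is precisely what forces $G_{I}\transpose{G_{I}}$ to have a nonzero diagonal entry. The existence of such an information set is a basis-exchange argument: assuming (after discarding any zero columns) that $C$ has no zero columns, suppose every information set $I$ satisfied $\mathbf 1_{I}\in C^{\perp}$; then comparing two information sets related by a single exchange $a\leftrightarrow b$ gives $e_{a}+e_{b}\in C^{\perp}$, and since there are no loops and $d>r$ forbids coloops, within each connected component of the column matroid every pair of elements is exchangeable, so $C^{\perp}$ would contain the whole even-weight code of each block and every component would have rank at most $1$ — but a rank-one code has a good information set outright (a coordinate in the support of a generator), and combining such a choice on one component with arbitrary information sets on the rest yields an $I$ with $\mathbf 1_{I}\notin C^{\perp}$, a contradiction. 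Assembling the three ingredients — puncture down to a non-degenerate form, then obtain non-alternation by one repair puncture or by a careful choice of information set, then invoke the classification of forms — proves the lemma, with essentially all the content concentrated in the self-orthogonal sub-case.
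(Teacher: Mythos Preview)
Your proof is correct but takes a substantially different route from the paper's. The paper avoids the form-classification machinery and the self-orthogonal case split altogether by working with a \emph{systematic} generator matrix (rows $B_1,\ldots,B_r$ with $B_{ij}=\delta_{ij}$ for $j\le r$) and running Gram--Schmidt directly: at step $\ell{+}1$ one sets $\Lambda=B_{\ell+1}+\sum_{i\le\ell}\lambda_iH_i$ with the $\lambda_i$ chosen to make $\Lambda^I$ orthogonal to $H_1^I,\ldots,H_\ell^I$, and if $\scalprod{\Lambda^I}{\Lambda^I}=0$ one adjoins coordinate $\ell{+}1$ to $I$. The systematic form guarantees $\Lambda_{\ell+1}=1$ while $H_{i,\ell+1}=0$ for all $i\le\ell$, so this single puncture flips the self-product to a nonzero value without disturbing any previously established orthogonality; a rescaling (Frobenius is onto) then produces the next orthonormal vector $H_{\ell+1}$. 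This is a few lines and never needs to treat the self-orthogonal case separately --- the systematic coordinate is always available to rescue the step. Your approach is more structural: push the Gram matrix to full rank by rank-one punctures (valid because for symmetric $M$ one has $\mathrm{col}(M)=(\Ker M)^\perp$, so $g_i\notin\mathrm{col}(M)$ forces $\Ker(M+g_ig_i^{\intercal})=\Ker M\cap g_i^{\perp}$ to drop by one dimension), repair alternation with one extra puncture or a matroid basis-exchange argument, then invoke the classification of symmetric bilinear forms in characteristic~$2$. Both arguments are sound; the paper's is markedly shorter, while yours exposes more of the underlying geometry. Incidentally, your self-orthogonal endgame is cleaner than you state: once every block has rank~$1$, the common block-columns $v_1,\ldots,v_r$ form a basis of $\FF_q^{\,r}$, hence $\sum_jv_j\neq 0$ by independence, and therefore \emph{every} information set $I$ already satisfies $\mathbf 1_I\notin C^\perp$ --- no special choice on one component is needed.
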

\begin{proof}
Note that the condition $d>r$ ensures that puncturing on at most $r$
positions does not decrease the code dimension.

We start with a systematic generating matrix of $C$: denoting by
 $B_1,\ldots ,B_r$ its rows, we have, for $j=1,2,\ldots ,r$, that the
 $j$-th coordinate $B_{ij}$ of $B_i$ equals $B_{ij}=\delta_{ij}$. 

Let $I_1=\emptyset$ if $\scalprod{B_1}{B_1}\neq 0$ and $I_1=\{1\}$ if 
$\scalprod{B_1}{B_1}= 0$. In both cases we therefore have 
$\scalprod{B_1^{I_1}}{B_1^{I_1}}\neq 0$, and 
because $q$ is a power of $2$
every element in $\FF_q$ is a square, and therefore
we have that a non-zero multiple of $B_1$,
that we name $H_1$, satisfies $\scalprod{H_1^{I_1}}{H_1^{I_1}}=1$.

Next, suppose by induction that we have found $\ell$ codewords
$H_1,\ldots , H_\ell$ of $C$, $1\leq \ell\leq r-1$ and a subset
$I\subset\{1,\ldots ,\ell\}$ such that
\begin{enumerate}
  \item
  for every $1\leq i,j\leq \ell$, $\scalprod{H_i^{I}}{H_j^{I}} =
  \delta_{ij}$
  \item
  for every $i=1,\ldots ,\ell$, for every $i+1\leq j\leq r$, the
  $j$-th coordinate of $H_i$ satisfies $H_{ij}=0$.
\end{enumerate}
We show that we can add a codeword $H_{\ell+1}$ to $H_1,\ldots
,H_\ell$ and possibly add coordinate $\ell+1$ to $I$, while keeping
properties 1 and 2 above satisfied. This will prove the Lemma by
induction.
To this end consider the linear combination
$$\Lambda = B_{\ell+1} + \sum_{i=1}^\ell\lambda_iH_i.$$
There clearly is a choice of $\lambda_1,\ldots ,\lambda_\ell\in\FF_q$
that makes $\Lambda^I$ orthogonal to $H_1^I,\ldots ,H_\ell^I$. 
If $\scalprod{\Lambda^I}{\Lambda^I}\neq 0$ leave $I$ unchanged,
otherwise adjoin the element $\ell+1$ to it. Property~2
ensures that the orthogonality relations
$\scalprod{\Lambda^I}{H_i^I}=0$, $i=1\ldots \ell$ are unchanged.
The required code vector is $H_{\ell+1}=\lambda\Lambda$ where $\lambda$
is chosen so that $\lambda^2\scalprod{\Lambda^I}{\Lambda^I}~=~1$.
\end{proof}

Now suppose the code $C$ has dimension $r$ and square distance
$\hat{d}>r$.
We always have $d\geq\hat{d}$ (consider $\xv *\xv$ where $\xv$ is a
minimum weight codeword of $C$) so that Lemma~\ref{lem:orthog} applies 
and we obtain a punctured code of $C$ that has an orthonormal basis,
that has dimension $r$ and square
distance at least $\hat{d}-r$, since it should be clear that
puncturing and taking the square yields the same code as taking the
square and then puncturing. In particular if we start from the codes of
\cite{Hugues} that are guaranteed to have $\hat{d}$ at least equal to
a linear function of $n$, we may first take a subcode to ensure a
dimension that stays linear in $n$ but satisfies $r<\hat{d}$, and then
puncturing will yield a code with square minimum distance that still
behaves as a linear function of $n$. Actual rates are computed at the
end of this section.
\end{paragraph}

Protocol $P_1$ works under the assumption that Bob is honest. 
Now Bob may cheat in Step \ref{item:u}, and ask for any mixture of
$x_\ell$ and $y_\ell$ of his choice, that may differ from \eqref{eq:v}
and \eqref{eq:v2}, in an attempt to obtain some mixture of the two
secrets, e.g. some bits of $\sv$ and some bits of $\tv$, or some
sums of the bits of $\sv$ and $\tv$. 
The modified protocol $P'_1$ described next makes sure he cannot do
anything of the kind.

\subsection{Defeating Bob's cheating strategies}\label{sec:rate}
The protocol $P_1'$ below simply adds a compression function to the secrets
$\sv$ and $\tv$ of protocol $P_1$. The compression function is
revealed to Bob only after protocol $P_1$ has been performed.

\begin{framed}
{\bf Protocol $P'_1$.}
Consider the setting of Protocol $P_1$.
Alice has two secrets 
\[
\tilde{\sv}=\begin{bmatrix}\tilde{s}_1\\ \vdots\\ \tilde{s}_{u}\end{bmatrix},
~ \tilde{\tv}=\begin{bmatrix}\tilde{t}_1\\\vdots\\\tilde{t}_{u}\end{bmatrix} 
\]
with coefficients in $\FF_2^m$, and $u=r(\frac 12-\delta)$.
\begin{enumerate}
\item
Alice picks uniformly at random two matrices, $M_s$ 
and $M_t$ both of dimension
$r(\tfrac{1}{2}-\delta) \times r$, with coefficients in $\FF_2$,
and two $r$-dimensional vectors $\sv$, $\tv\in(\FF_2^m)^r$ such that
\[
M_s\sv = \tilde{\sv},~M_t\tv = \tilde{\tv}.
\]
\item
Alice and Bob perform Protocol $P_1$ with $\sv$ and $\tv$ computed above, 
so that Bob gets either $\sv$ or $\tv$.
\item
Finally Alice sends Bob the matrices $M_s$ and $M_t$, and Bob computes 
\[
M_s\sv=\tilde{\sv}~(\mbox{or }M_t\tv=\tilde{\tv}) 
\]
to get the secret he wanted.
\end{enumerate}
\end{framed}

\begin{paragraph}{Suppose Bob is dishonest.} In Step \ref{item:xy} of $P_1$, Bob
asks Alice either $x_\ell$ or $y_\ell$ via the protocol $P_0$. He could cheat by
asking for some choice of $x_\ell$ and $y_\ell$ that does not correspond to \eqref{eq:v}
or \eqref{eq:v2}.
As a result, in Step \ref{item:Hv}, Bob will get some vector $\vv$ whose
components $v_\ell$ are either $x_\ell$ or $y_\ell$. Specifically, from
\eqref{eq:y=} we have that Bob gets exactly a vector
\[
\vv = \xv + \transpose{(\sv + \tv)}{\mathbf H}*\uv
\]
where $\uv$ is an arbitrary row-vector chosen by Bob, the column vector $\sv +
\tv$ is chosen by Alice independently of $\xv$ (since $\tv$ can be any quantity
independent of $\sv$), and ${\mathbf H}*\uv$ can be taken to be the matrix deduced from
${\mathbf H}$ by replacing its $\ell$th column by the zero column whenever $u_\ell=0$.
With this convention we have $((\sv +\tv){\mathbf H})*\uv=(\sv+\tv)({\mathbf H}*\uv)$.

Consider now ${\mathbf H}\transpose{\vv}$. We have
\begin{align*}
{\mathbf H}\transpose{\vv} &= {\mathbf H}\transpose{\xv} + {\bf H}({\mathbf H}*\uv)^\intercal (\sv+\tv)\\
&= \sv + V(\sv+\tv)\\
&= (\Id+V)\sv +V\tv
\end{align*}
where $V$ is the $r\times r$ matrix ${\bf H}({\mathbf H}*\uv)^\intercal$ over
$\FF_2$. We will not attempt to characterize the set of possible matrices $V$
that arise in this way and simply assume that $V$ can be any binary $r\times r$
matrix. 
Let us also write
\begin{equation}\label{eq:UV}
{\bf H}\transpose{\vv} = 
U \sv+V \tv
\end{equation}
and remark that knowledge of $\sv+\tv$ gives us
$(\sv+\tv)^\intercal{\mathbf H}=\xv+\yv$ and enables us to turn $\vv$ into its
complement vector, i.e. with coordinate $y_\ell$ for every $v_\ell=x_\ell$
and with coordinate $x_\ell$ for every $v_\ell=y_\ell$. Therefore, for any
fixed $\uv$, there is a bijection between the couples $(\xv,\sv+\tv)$ and
$(\vv,\sv+\tv)$,
both of which live in $(\FF_2^m)^{n+r}$. We also remark that $\sv+\tv$ and
$\sv+V(\sv+\tv)$ gives us $\sv$ and therefore $\tv$: therefore the map
$(\sv,\tv)\mapsto (\sv+\tv,U\sv+V\tv)$ is one-to-one. The conclusion is that
knowledge of $\vv$ gives us $U\sv+V\tv$, and no additional knowledge on
$\sv+\tv$, hence no additional knowledge on the couple $(\sv,\tv)$. Henceforth
we forget all properties of $(U,V)$ stemming from their particular structure, except for this last fact.

We now prove that: 
\begin{prop}\label{prop:rankV}
One of the following holds:
\begin{itemize}
\item either $\rank V \leq r/2$, in which case Bob has no
information on $\tilde{\tv}$, even when he is given $\tilde{\sv}$, meaning precisely that 
\[
H(\tilde{T}|\tilde{S},\mathcal{O}) \geq rm\left(\frac{1}{2}-\delta\right) -
f_1(r,\delta)-nf_0(\varepsilon,m)
\]
where $f_1(r,\delta)$ is exponentially small in $r$, and where $f_0(\varepsilon,m)$ is from Theorem~\ref{theor:HBob}.
\item
or $\rank V > r/2$, in which case Bob has no
information on $\tilde{\sv}$, even when he is given $\tilde{\tv}$, which means
\[
H(\tilde{S}|\tilde{T},\mathcal{O}) \geq rm\left(\frac{1}{2}-\delta\right) -
f_1(r,\delta)-nf_0(\varepsilon,m).
\]
\end{itemize}
\end{prop}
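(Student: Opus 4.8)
The plan is to analyze the situation from the previous discussion, where Bob's cheating has reduced matters to the following: Bob learns exactly $U\sv + V\tv$ for some binary $r\times r$ matrices $U = \Id + V$ and $V$ (forgetting all further structure), and nothing more about the pair $(\sv,\tv)$. The key point is that $\sv$ and $\tv$ each carry $rm$ bits of entropy (they are uniform and independent in $(\FF_2^m)^r$), while $\tilde{\sv} = M_s\sv$ and $\tilde{\tv} = M_t\tv$ are the real secrets, each of length $rm(\tfrac12-\delta)$. I would split into the two cases according to $\rank V$, treating $\rank V \le r/2$ explicitly (the other case being symmetric, swapping the roles of $\sv$ and $\tv$, with $U = \Id+V$ playing the role of $V$; one checks $\rank U > r/2$ when $\rank V \le r/2$ since $\rank U \ge r - \rank V \ge r/2$, and more carefully one uses that the case split is really on which of $U,V$ has rank exceeding $r/2$).

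First I would fix $\tilde{\sv}$ (equivalently, condition on it) and show Bob has essentially no information on $\tilde{\tv}$. Suppose $\rank V \le r/2$. Bob's observation determines $W := U\sv + V\tv$. Since $M_t$ is chosen uniformly at random after $\tv$, and $\tilde{\tv} = M_t\tv$, I want to argue via the Leftover Hash Lemma (exactly as in the proof of Theorem~\ref{theor:HBob}) that $M_t\tv$ is close to uniform given Bob's view, provided $\tv$ retains enough conditional min-entropy. The subspace of $\tv$-values consistent with a fixed $W$ and fixed $\sv$ is a coset of $\Ker V$, which has dimension $m(r - \rank V) \ge mr/2$ over $\FF_2$; since Bob does not know $\sv$ either, this only helps. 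So conditioned on everything Bob sees, $\tv$ has min-entropy at least $mr/2$ (in fact one must be slightly careful and average over $\sv$, but the min-entropy bound holds pointwise up to the coset structure). Applying the Leftover Hash Lemma with the universal family $\{M_t\}$ hashing down to $mr(\tfrac12-\delta)$ bits gives $H(\tilde{T}\mid \tilde{S}, \mathcal{O}') \ge mr(\tfrac12-\delta) - f_1(r,\delta)$ with $f_1$ exponentially small in $r$, where $\mathcal{O}'$ denotes Bob's view of the idealized protocol $P_1$ (in which $P_0$ is perfect).

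Finally I would remove the idealization: the actual protocol $P_0$ leaks, per coordinate and per invocation, at most $f_0(\varepsilon,m)$ bits, and there are $n$ invocations, so by the same memoryless-channel argument used just before Corollary~\ref{cor:P1} (subadditivity of mutual information over the $n$ coordinates, $I(V_2,Z)\le\sum_\ell I((V_2)_\ell,Z_\ell)$), the total extra leakage is at most $nf_0(\varepsilon,m)$. Subtracting this yields the stated bound
\[
H(\tilde{T}\mid\tilde{S},\mathcal{O}) \ge rm\left(\tfrac12-\delta\right) - f_1(r,\delta) - nf_0(\varepsilon,m),
\]
and the case $\rank V > r/2$ is identical with $\sv,\tilde{\sv},M_s$ in place of $\tv,\tilde{\tv},M_t$. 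The main obstacle I anticipate is making the conditional min-entropy bound on $\tv$ (given $\sv$ unknown and $W = U\sv+V\tv$ known) clean enough to feed into the Leftover Hash Lemma: one has to verify that the relevant distribution really is (close to) uniform on a coset of a subspace of dimension $\ge mr/2$, accounting correctly for Bob's partial ignorance of $\sv$, rather than just a vague "rank" heuristic. Everything else is bookkeeping built on Theorem~\ref{theor:HBob} and the Leftover Hash Lemma.
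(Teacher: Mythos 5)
Your case-1 sketch is broadly aligned with the spirit of the paper, but two points deserve attention, the second of which is a genuine gap.

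First, a technical concern with the Leftover Hash Lemma route in case 1. The family $\{M_t\}$, acting on $\tv\in(\FF_2^m)^r$ by left-multiplication with a uniformly random $r(\tfrac12-\delta)\times r$ matrix over $\FF_2$, is \emph{not} a universal hash family on $(\FF_2^m)^r$: for $\tv-\tv'$ of rank one as an $r\times m$ matrix (e.g.\ a single nonzero $\FF_2^m$-coordinate), the collision probability $P[M_t(\tv-\tv')=0]$ is $2^{-r(1/2-\delta)}$, not $2^{-mr(1/2-\delta)}$. So the LHL cannot be invoked as stated. The paper avoids this entirely: it gives Bob $\sv$ itself (stronger than $\tilde{\sv}$), notes that $\tv$ is then uniform on a coset of $(\Ker V)^m$, picks $r/2$ linearly independent vectors in $\Ker V$, and applies Lemma~\ref{lem:abc} to show that with probability $\geq 1-2^{-r\delta}$ the restriction of $M_t$ to $\Ker V$ is surjective, making $\tilde{\tv}$ \emph{exactly} uniform. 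This is a simple rank argument, not an LHL one, and it sidesteps the non-universality obstacle.

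Second, and more importantly, the symmetry you invoke for case 2 does not hold. You argue that when $\rank V\le r/2$ one has $\rank U\ge r/2$ (true, since $U+V=\Id$ forces $\rank U+\rank V\ge r$), and that the second case should be handled ``identically with $\sv,\tilde{\sv},M_s$ in place of $\tv,\tilde{\tv},M_t$.'' But $\rank V>r/2$ does \emph{not} imply $\rank U\le r/2$: generically both $U$ and $V$ are invertible (e.g.\ for $r=2$, $V=\begin{pmatrix}1&1\\1&0\end{pmatrix}$ and $U=\Id+V=\begin{pmatrix}0&1\\1&1\end{pmatrix}$ both have full rank). Consequently, the mirror of your case-1 argument --- condition on $\tv$, observe $U\sv$, and appeal to a large $\Ker U$ --- fails outright: if $U$ is invertible, knowing $\tv$ and $\mathbf{z}=U\sv+V\tv$ pins down $\sv$ completely. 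The paper's case-2 argument is genuinely different and asymmetric: Bob is given only $\tilde{\tv}=M_t\tv$ (not $\tv$), so $\tv$ remains ambiguous up to a coset of $(\Ker M_t)^m$, and one must show that $\dim(V\Ker M_t)\ge r/2$ with high probability. This uses that $\Ker M_t$ is a random subspace of dimension $\ge r(\tfrac12+\delta)$ and that $\dim\Ker V<r/2$, via Lemma~\ref{lem:abc} applied to $\langle B\rangle+\Ker M_t$ for a suitable $\langle B\rangle\supset\Ker V$. Without this separate argument, the proposition is not proved.
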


We will rely on the following classical lemma (e.g. \cite[Ch. 14, exercice
8]{MS}):
\begin{lem}
\label{lem:abc}
Let $B,C$ be $a\times b$ and $a\times c$ binary matrices respectively, and let $A=[B|C]$ be the $a\times(b+c)$ matrix that is obtained from concatenating $B$ and $C$. Suppose $0\leq b<a<b+c$. Let $B$ be a fixed matrix of rank $b$, and let $C$ be chosen randomly and uniformly among all binary $a\times c$ matrices. Then $P(\rank{A}<a)\leq 1/2^{b+c-a}$.
\end{lem}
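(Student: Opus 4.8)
The plan is to pass to the dual (kernel) formulation. Since $A=[B\,|\,C]$ has $a$ rows and, by hypothesis, more than $a$ columns, we have $\rank A\le a$ with equality iff the $a$ rows of $A$ are linearly independent over $\FF_2$. Thus $\rank A<a$ holds exactly when there is a nonzero row vector $y\in\FF_2^a$ with $yA=0$, which, writing $A=[B\,|\,C]$, is the same as $yB=0$ \emph{and} $yC=0$ simultaneously. The first condition $yB=0$ confines $y$ to the left kernel of $B$; because $\rank B=b$, the map $y\mapsto yB$ from $\FF_2^a$ to $\FF_2^b$ has image of dimension $b$, so its kernel is a subspace of $\FF_2^a$ of dimension $a-b$, containing exactly $2^{a-b}-1$ nonzero vectors. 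Hence it suffices to bound, over the random choice of $C$, the probability that \emph{some} fixed nonzero $y$ lying in this left kernel also satisfies $yC=0$.

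Next I would fix a single nonzero $y\in\FF_2^a$ and compute $P(yC=0)$. Writing $C=[C_1\,|\,\cdots\,|\,C_c]$ with the columns $C_j$ independent and uniform in $\FF_2^a$, the event $yC=0$ is the intersection of the $c$ independent events $\{\,y\cdot C_j=0\,\}$; since $y\neq 0$, each of these has probability $1/2$, so $P(yC=0)=2^{-c}$. A union bound over the $2^{a-b}-1$ nonzero vectors of the left kernel of $B$ then gives
\[
P(\rank A<a)\ \le\ (2^{a-b}-1)\,2^{-c}\ <\ 2^{\,a-b-c}\ =\ \frac{1}{2^{\,b+c-a}},
\]
which is the claimed bound (indeed with a little room to spare). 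This is exactly where the two hypotheses are used: $b<a$ makes the left kernel of $B$ nontrivial, so there is something to sum over, while $a<b+c$ makes the exponent $b+c-a$ positive, so the bound is nonvacuous.

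There is no serious obstacle here; the argument is a standard probabilistic/union-bound computation once the problem is dualized. The only point needing a touch of care is to run the union bound over nonzero $y$ only — the zero vector never witnesses a rank drop — which is what yields the count $2^{a-b}-1$ and hence the strict inequality that comfortably implies the stated ``$\le$''. (An alternative route, adjoining the columns of $C$ one at a time and bounding at each step the chance that the running column span fails to increase, also works and gives the same exponent, but the kernel computation above is the most direct.)
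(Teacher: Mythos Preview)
Your argument is correct: dualizing to the left kernel of $B$, computing $P(yC=0)=2^{-c}$ for each fixed nonzero $y$, and taking a union bound over the $2^{a-b}-1$ nonzero vectors in that kernel gives exactly the claimed bound (with strict inequality, in fact). The paper does not supply its own proof of this lemma; it simply cites it as a classical fact, pointing to an exercise in MacWilliams--Sloane, so there is nothing further to compare against.
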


\begin{proof}[Proof of Proposition~\ref{prop:rankV}]
\hspace{1cm}\vspace{-4mm}
\begin{enumerate}[leftmargin=*]
\item $\rank V \leq r/2$. Suppose that Bob knows $\sv$ which is stronger than Bob knowing $\tilde{\sv}$, that is
\[ 
H(\tilde{T}|\tilde{S},\mathcal{O}) \geq H(\tilde{T}|S,\mathcal{O}).
\]
We first suppose that Bob's observation $\mathcal{O}$  reduces to $\vv$, 
i.e. the collection of
$n$ binary $m$-tuples $v_\ell$ that are either $x_\ell$ or $y_\ell$, whichever he has 
requested when executing the $\ell$-th instance of protocol $P_0$.
Then between $U\sv+V\tv$ and $\sv$, Bob obtains the fixed quantity $V\tv$ and
for him
$\tv$ is uniformly distributed among vectors of the form ${\bm \tau} +(\Ker
V)^m$, where we have identified $V$ with a linear map $\FF_2^r\to \FF_2^r$ and
where
${\bm \tau}$ is any fixed preimage of $V\tv$. The secret
$\tilde{\tv}$ may therefore be
any quantity in $M_t({\bm \tau} +(\Ker V)^m)$. Let $\mathcal{M}$ be the set of matrices $M_t$
such that $M_t(\Ker V)$ is the full image space $\FF_2^{r(\frac 12-\delta)}$.
Since we have supposed $\rank V \leq r/2$, we have $\dim\Ker V\geq r/2$, and there must exist at least $r/2$ linearly independent vectors in $\Ker V$. Consider the images by $M_t$ of these $r/2$ vectors: they make up the columns of a
uniform random $r(1/2-\delta)\times r/2$ matrix, which by Lemma~\ref{lem:abc},
is of full-rank $r(1/2-\delta)$
with probability at least $1-2^{-r\delta}$. In other words we have
$P(M_t\in\mathcal{M})\geq 1-2^{-r\delta}$. We therefore have: 
\begin{eqnarray}
H(\tilde{T}|S,\mathcal{O})
& = & H(M_t T|S,\mathcal{O}) \nonumber\\ 
& =& P(M_t\in\mathcal{M})H(\tilde{T}|S,\mathcal{O},M_t\in\mathcal{M})\nonumber\\
  &&+
P(M_t\not\in\mathcal{M})H(\tilde{T}|S,\mathcal{O},M_t\not\in\mathcal{M})\nonumber \\
& \geq & (1-\tfrac{1}{2^{r\delta}})H(\tilde{T}|S,\mathcal{O},M_t\in\mathcal{M}) 
         =(1-\tfrac{1}{2^{r\delta}})mr(\tfrac 12-\delta)\nonumber\\
& \geq & m(\tfrac{r}{2}-\delta) -\tfrac{mr}{2^{r\delta}}\nonumber\\
H(\tilde{T}|S,\mathcal{O})&\geq&
m(\tfrac{r}{2}-\delta) -f_1(r,\delta).
\label{eq:H(T|S,O)}
\end{eqnarray}

\item $\rank V > r/2$. Bob again obtains ${\mathbf z}=U\sv+V\tv$, and we suppose this time that he is given $\tilde{\tv}=M_t\tv$. Our goal is to show that Bob obtains no information
on $\tilde{\sv}=M_s\sv$.
First consider that the possible values of $\tv$ given $\tilde{\tv}$ are
${\bm\tau}+(\Ker M_t)^m$,
for some fixed ${\bm\tau}$ such that $M_t{\bm\tau}=\tilde{\tv}$. The possible
values of $V\tv$ are $V{\bm\tau}+(V\Ker M_t)^m$. We have $\dim (V\Ker M_t) =
\dim\Ker M_t-\dim(\Ker V\cap\Ker M_t)$. 

Now
the kernel $\Ker M_t$ is a random subspace of $\FF_2^r$ of dimension at least
$r(\tfrac 12+\delta)$. Since we have supposed $\rank V > r/2$ we have $\dim\Ker V<r/2$. Choose a basis of $\Ker V$, and add arbitrary vectors of $\FF_2^r$ so as to obtain a basis $B$ of some subspace $\langle B\rangle$ of $\FF_2^r$ of dimension $r/2$ that contains $\Ker V$. 
Applying Lemma~\ref{lem:abc} we obtain that $\langle B\rangle +\Ker M_t$ is of
full rank $r$ with probability at least $1-1/2^{r\delta}$. 

In this case we have $\dim(\langle B\rangle\cap\Ker M_t)=\dim(\Ker M_t)+\dim \langle B\rangle-r=\dim(\Ker M_t)-r/2$. Since $\Ker V\subset\langle B\rangle$ we also have
$\dim(\Ker V\cap\Ker M_t)\leq\dim(\Ker M_t)-r/2$. Therefore,
\[
\dim (V\Ker M_t)=\dim\Ker M_t-\dim(\Ker V\cap\Ker M_t)\geq r/2.
\]

Now we have that the set $\{\sv, U\sv+V\tv={\mathbf z}\}$ is an $\FF_2^m$
expansion of a translate of an
$\FF_2$-vector space of dimension at least $\dim(V\Ker M_t)$.
As before, the image under the random matrix $M_s$ of a fixed subspace
of dimension at least $r/2$ has maximum dimension $r(1/2-\delta)$ with probability at least
$1-2^{-r\delta}$. So with probability $(1-2^{-r\delta})^2$ both random
matrices $M_t$ and $M_s$ behave as desired and $\tilde{\sv}$ can be any vector 
in $(\FF_2^m)^{r(\frac 12-\delta)}$ with uniform probability. Therefore
similarly to \eqref{eq:H(T|S,O)} we obtain
\begin{equation}
H(\tilde{S}|T,\mathcal{O}) \geq m(\tfrac{r}{2}-\delta)
-f_1(r,\delta).\label{eq:H(S|T,O)}
\end{equation}
\end{enumerate}
The estimates \eqref{eq:H(T|S,O)} and \eqref{eq:H(S|T,O)} have been obtained
with the assumption that Bob's observation reduces to $\vv$. In the actual
protocol, for every execution of protocol~$P_0$, Bob obtains $v_\ell$, which is equal to
one of the two secrets $x_\ell$ or $y_\ell$, plus $f_0(m,\varepsilon)$ bits of
information on the other secret, as guaranteed by Theorem~\ref{theor:HBob}.
By the same argument as that preceding Corollary~\ref{cor:P1}, consisting of
viewing all the individual instances of Protocol $P_0$ as
the successive instantiations of
a discrete memoryless channel without feedback, we have that
Bob obtains at most $nf_0(m,\varepsilon)$ additional bits of
information, hence the expressions in Proposition~\ref{prop:rankV}.
\end{proof}

\end{paragraph}

\begin{paragraph}{Rate of the oblivious transfer protocol $P_1'$.}
The protocol $P_1'$ is instantiated with two secrets of length $u=r(\tfrac 12
-\delta)$, where 
$r$ is the length of the secrets in the protocol $P_1$ and $\delta$ is a
positive number that can be taken arbitrarily close to $0$. Thus, the rate of
$P_1'$ can be arbitrarily close to
\[
\Rc_1'=\frac{\Rc_1}{2}.
\]
Now the protocol $P_1$ requires $n$ uses of the protocol $P_0$, where 
$n$ is the length of the $[n,r,d]$ code $C$. The total length of a
secret is $r$ times the length of a secret of $P_0$. Therefore the
overall rate of the protocol $P_1$ is
$$\Rc_1=R\Rc_0$$
where $\Rc_0$ is the rate of $P_0$ and $R=r/n$ is the rate of the code $C$.
From Section~\ref{sec:P0} we have that $\Rc_0$ can be taken arbitrarily close to
the limiting value $\Rc_0=0.108$, and from \cite{Hugues}
we have an infinite family of linear codes of length $n$ and square
minimum distance $\hat{d}\geq n/1575$ and dimension $>\hat{d}$. From
the discussion at the end of Section~\ref{sec:defeatA} we get that a
punctured version of the codes of \cite{Hugues} with a rate $R$
arbitrarily close to $1/1575$ will satisfy all the conditions of
protocol $P_1$. We get therefore the achievable rate:
$$\Rc_1=\frac{0.108}{1575} \approx 0.69\; 10^{-4}$$
and hence $\Rc_1'\approx 0.34\; 10^{-4}$.
\end{paragraph}

\subsection{Summary and Comments}\label{sec:summary}
Let $N$ denote the total number of bits sent over the noisy channel during
protocol $P_1'$.
We have $N=4n_0n$, where the number of bits sent by Alice over the
noisy channel is $4n_0$ for each instance of protocol $P_0$,
and $n$ is the number of times protocol $P_0$ is repeated.
We have set $n=n_0^2$ to be specific, but this is somewhat arbitrary,
and any value $n=n_0^\alpha$, $\alpha>1$ would yield similar asymptotic
guarantees.
These guarantees are the following:
\begin{enumerate}[leftmargin=*]
\item When Bob follows the protocol precisely, he obtains the secret he wishes
for with probability at least $1-\exp(-N^\alpha)$ for some $0<\alpha<1$. This is
obtained by
using polar codes in protocol $P_0$ that achieve the capacity of the least noisy
(in effect erasureless) channel. We recall that this family of codes is
constructive and can be decoded in quasi-linear time, with a probability of a
decoding error that is guaranteed to be subexponential in the blocklength, i.e.
$\exp(-n_0^\beta)$, see \cite[Theorem 1]{TalVardy}. The probability of a decoding
error on at least one of the instances of $P_0$ scales therefore as
$\exp(-N^\alpha)$.
\item Theorem~\ref{theor:HBob} and Proposition~\ref{prop:rankV}
guarantee that whatever Bob does, on at least one of the two secrets he obtains
at most
a vanishing number of bits of information, that behaves like $\exp(-cm^\alpha)$
where $m$ is the secret size, $0<\alpha<1$, and $c$ is a constant that is
determined by how close we are to the limiting rate $R_1'$ computed in the last
subsection. In other words, $c$ is dependent on the values of $\varepsilon$ and
$\delta$ in Theorem~\ref{theor:HBob} and Proposition~\ref{prop:rankV}.
\item
The protocol is protected against a cheating Alice in the following sense: Bob
will declare Alice a cheater if he receives a total number of erased symbols
that exceeds a certain threshold $\tau$. If Alice cheats so as to
uncover which of the two secrets Bob is trying to obtain, she will
almost surely be accused, i.e. with probability $1-\exp(-n_0)=1-\exp(-N^{1/3})$.
It may happen that an honest Alice will be wrongly accused of cheating by Bob,
but this happens with a vanishingly small probability that scales as
$\exp(-n_0)=\exp(-N^{1/3})$.
\end{enumerate}

In the next section we improve upon the rate $R_1'$ computed in Section~\ref{sec:rate} by modifying protocol
$P_1$, so as to allow us to replace the binary code $C$ by a $q$-ary
one: codes with large square distances are easier to construct in the
$q$-ary case and better rates are obtained. The techniques require to measure
leakage of information to Bob and the probabilities of Alice successfully
cheating without being found out, or of Alice being wrongly accused of cheating
are unchanged, and we will compute the new limiting rates without explicitly
mentioning that we need to be $\varepsilon$ and $\delta$ away from them as
in Theorem~\ref{theor:HBob} and Proposition~\ref{prop:rankV}.
%
%
\section{A Positive Rate $q$-ary Oblivious Transfer Protocol}
\label{sec:qary}

Let $q$ be a power of $2$. Protocol $P_2$ below is a $q$-ary variant
of protocol $P_1$: it relies upon a version of protocol $P_0$ that is
a 1-out-of-$q$ oblivious transfer protocol, that, like $P_0$ does not
protect against Alice's cheating strategy. Let us denote $P_0^q$ this
protocol: we shall show later in this section how to transform the
original protocol $P_0$ into its $q$-ary version $P_0^q$. 
The matrix $\HH$ is this time a matrix over the field on $q$ elements,
it is the generating matrix of a code $C$, with otherwise the same
requirements as in the binary case, namely that the rows of $\HH$ make
up an orthonormal basis of $C$ and the square distance $\hat{d}$ of
$C$ grows linearly in $n$. The integer $m$ is now the length of the $q$ secrets 
in protocol $P_0^q$.
\begin{framed}
{\bf Protocol $P_2$.}
Consider the same setting as Protocol $P_1$, over $\FF_q$ instead of $\FF_2$, 
that is the $r\times n$ generating matrix ${\bf H}$ of the code $C$
has coefficients in $\FF_q$, and the two secrets $\sv$ and $\tv$ have
length $r$, with coefficients in $\FF_q^m$. 
\begin{enumerate}
\item
\label{step:vq}
As in protocol $P_1$, Alice picks uniformly at random a vector
$\xv\in(\FF_q^m)^n$ such that 
\[
{\HH}\transpose{\xv}={\sv}.
\]
She then computes
the $q-1$ vectors $\yv_i$, $i=2,\ldots,q-1$, with coefficients in $\FF_q^m$, given by
\[
\yv_i = \xv+\lambda_i\sum_{j=1}^r (s_j+t_j)H_j,
\]
where $\lambda_i$ runs through every non-zero element of $\FF_q$. We
set $\lambda_1=1$, so that
$\yv_1$ coincides with $\yv$ in protocol $P_1$.  We write
$\yv_i=[y_{i1},y_{i2},\ldots ,y_{in}]$.
\item
Bob computes a $q$-ary vector $\uv=[u_1,\ldots ,u_n]$ which is
orthogonal to $\hat{C}$.
\item
If Bob wants the secret $\sv$, then he asks Alice through the protocol
$P_0^q$ for the vector $\xv + \uv*(\xv+\yv_1)$. Equivalently, whenever
$u_\ell =\lambda_i$, Bob asks for the string $y_{i\ell}$.
If Bob wants the secret $\tv$
instead, he asks for the vector $\yv_1 + \uv*(\xv+\yv_1)$.
\item
\label{step:1q}
After $n$ rounds of the protocol $P_0^q$, Alice has sent Bob the
requested $n$-tuple $\vv=[v_1,\ldots,v_n]$ which has again exactly the
expression given by \eqref{eq:v} and \eqref{eq:v2}.
\item
Once Bob gets $\vv$, he computes ${\bf H}\transpose{\vv}$ to recover $\sv$ or $\tv$.
\end{enumerate}
\end{framed}

\begin{framed}
  {\bf Protocol $P_2'$.}

The protocol is is obtained from protocol $P_2$
in exactly the same way as protocol $P_1'$ is obtained from $P_1$, 
with the two secrets 
$\tilde{\sv}$ and $\tilde{\tv}$ of length $u=r(\tfrac 12 -\delta)$ having their coefficients in $\FF_q^m$.
\end{framed}

We first argue that Bob will indeed recover the secrets by computing 
${\bf H}\transpose{\vv}$. The proof from Protocol $P_1$ carries
through quite straightforwardly. As previously, we have that
$\xv+\yv_1$ belongs to the code $C$, and since $\uv$ is orthogonal to
$\hat{C}$, we have that the scalar product of a row of $\HH$ with
$\uv*(\xv+\yv_1)$ is equal to the scalar product of $\uv$ with the
$*$-product of two vectors of $C$, hence equals zero.
Therefore ${\bf H}\transpose{\vv}$ is always equal to either 
$\HH\transpose{\xv}$ or $\HH\transpose{\yv_1}$. 
We have $\HH\transpose{\xv}=\sv$ by choice of $\xv$ and
$\HH\transpose{\yv_1}=\tv$ through the orthonormal property of the rows
$H_i$ of $\HH$.

\begin{paragraph}{A 1-out-of-$q$ oblivious transfer $P_0^q$.}
Next, a 1-out-of-$q$ oblivious transfer protocol is needed for Bob to obtain every $v_i$, 
$i=1,\ldots,n$. It may be obtained by $q-1$ applications of the
1-out-of-2 oblivious transfer protocol $P_0$, as in \cite{BCR}.
If $x_1,x_2,\ldots ,x_q$ are the secrets to be transferred, Alice
chooses $q-1$ random strings $r_1,r_2,\ldots ,r_{q-1}$ uniformly among
all strings such that $r_1+r_2+\cdots r_{q-1}=x_q$.
Then protocol $P_0$ is applied to the $q-1$ pairs of secrets
$$(x_1,r_1), (x_2+r_1,r_2), \ldots ,(x_i+r_1+\cdots
r_{i-1},r_i),\ldots (x_{q-1}+r_1+\cdots +r_{q-2}, r_{q-1}).$$
We see that if Bob asks for the first term $x_i+r_1+\cdots r_{i-1}$ of
the $i$-th pair, he loses all chance of obtaining $r_i$, and the
subsequent $x_j$, $j>i$. He can obtain $x_i$ by asking for the second
term, $r_j$, of the preceding pairs for $j=1\ldots i-1$.
\end{paragraph}

\begin{paragraph}{Construction of the required code $C$.} 
We need a code with an orthonormal basis and a large square distance $\hat{d}$.
From Lemma~\ref{lem:orthog} and the discussion just afterwards, such a
code will be obtained as soon as we have a code of dimension $k=r$ and large
square distance $\hat{d}$ 
 with $k<\hat{d}$. We also want this code to have the largest possible
 dimension, so we try to obtain the code with the largest possible
 square distance $\hat{d}$ satisfying $k\geq\hat{d}$, from which we
 will then take a subcode so as to have $k<\hat{d}$.

 We turn to algebraic geometry codes (see e.g.
\cite{AGcodes,Stichtenoth}). Consider codes $C(D,G)$ of length $n$ over $\FF_q$, defined to be the image of the linear evaluation map $ev:L(G)\rightarrow \FF_q^n$, $f \mapsto ev(f)=(f(P_1),\ldots,f(P_n))$, where $D=P_1+\ldots+P_n$ is a divisor on an algebraic curve $\mathcal{X}$ for the rational points $P_1,\ldots,P_n$, $\FF(\mathcal{X})$ is the function field of the curve $\mathcal{X}$, 
and $L(G)=\{ f \in \FF(\mathcal{X})^*,~\sum_{P\in \mathcal{X}}\nu_P(f)P + G \geq 0\} \cup \{0\}$, 
and $G$ some other divisor whose support is disjoint from $D$.

As observed in \cite[Lemma 14]{Hugues2},
we have
\begin{equation}
  \label{eq:C(D,2G)}
  \hat{C}(D,G)= C(D,G)*C(D,G) \subset C(D,2G).
\end{equation}
Indeed, for $c,c'\in C(D,G),$ we have
$$c*c'=ev(f)*ev(f')=(f(P_1)f'(P_1),\ldots,f(P_n)f'(P_n))=ev(ff'),$$
and $ff'\in\FF(\mathcal{X})$ with 
\[
\sum_{P\in \mathcal{X}}\nu_P(ff')P + 2G = (\sum_{P\in \mathcal{X}}\nu_P(f)P + G)+(\sum_{P\in \mathcal{X}}\nu_P(f')P + G) \geq 0
\]
showing that $ff'\in L(2G)$.

Now the parameters of the evaluation code $C(D,G)$ are known to
satisfy, when the degree $\deg{G}$ of the divisor $G$ is strictly less
than $n$, \cite[Cor. II.2.3]{Stichtenoth}
$$d\geq n-\deg G\quad \text{and} \quad k \geq \deg G + 1 -g$$
where $g$ is the genus of the algebraic curve. From
\eqref{eq:C(D,2G)} we therefore also have,
as long as $2\deg G<n$,
$$\hat{d}\geq n -2\deg G.$$
To have $k=\hat{d}$ we shall therefore aim for a divisor $G$ of degree
satisfying
\begin{equation}
  \label{eq:degG}
  n = 3\deg G + 1-g.
\end{equation}
\end{paragraph}

\begin{paragraph}{Rate of the oblivious transfer Protocols $P_2'$.}
Let us first establish the rate $R$ of the code $C$. We will obtain
the best result for $q=16$. The Tsfasman-Vladut-Zink bound 
\cite{AGcodes,Stichtenoth} tells us that we may choose curves with
genus $g$ such that $n\rightarrow\infty$ and
$n/g$ is arbitrarily close to $\sqrt{q}-1=3$. Choosing $\deg G$ as in 
\eqref{eq:degG} gives us $\frac 1n\deg G\rightarrow 4/9$ and $k\geq
\deg G +1-g$ gives us a rate at least $1/9$ for the code. The actual
code used in the protocol is possibly a punctured version, but since
the rate can only increase by puncturing up to the minimum distance,
we may guarantee a rate arbitrarily close to $R=1/9$ for the code $C$.

Now the overall rate of the protocol $P_2$ is 
\begin{align*}
  \Rc_2 &= \frac{2rm}{n\text{$\#$ channel uses for $P_0^q$}}\\
      &= \frac{2rm}{n(q-1)\text{$\#$ channel uses for $P_0$}}\\
      &= \frac{1}{q-1}R\Rc_0
\end{align*}
where $\Rc_0$ is the rate of protocol $P_0$.
Hence,
  $$\Rc_2= \frac{1}{9\times 15}0.108=0.8\; 10^{-3}.$$

The rate $\Rc_2'$ of the protocol $P_2'$ is $\Rc_2'=\Rc_2/2$, hence
\[
\Rc_2'= 0.4\; 10^{-3}.
\]

\end{paragraph}

\section{Concluding comments}
Binary codes $C$ with large rate and such that $\hat{C}$ has a large minimum
distance would of course yield improved rates for protocol $P_1'$. How large can
these rates be is a very intriguing question.

Apart from exhibiting codes $C$ with some extraordinary $\hat{C}$ behaviour,
improving upon the rates of this paper probably involves some alternative
approaches to the problem, or other ways of using the potential of $q$-ary
codes.

As discussed in Section~\ref{sec:summary}, the probabilities of something going
wrong
(Bob does not get his secret, Alice cheats, Bob falsely
accuses Alice of cheating) are subexponential in the total number $N$ of
transmitted bits. An interesting avenue of research would be to find explicit
achievable rates that guarantee an exponential behaviour for these failure
probabilities.

\appendix

\section*{Appendix: Proof of Lemma \ref{lem:min-entropy}}
\label{app:proof}

Recall that the channel randomly introduces $e$ erasures and a number
of errors on the non-erased symbols. Let us first modify slightly the
channel by assuming a fixed number $w$ of errors, chosen uniformly
among the $\binom{n-e}{w}$ possible choices, instead of binomially
distributed errors.

Build a bipartite graph, consisting of the $2^{nR}$ codewords as
vertices on the left,  and on the right all possible ternary strings
over the alphabet $\{0,1,*\}$ with $e$ erasures
(an erasure is represented by the $*$ symbol).
thus the right hand side of the graph has
\[
{n \choose e}2^{n-e}
\]
vertices.
We put an edge connecting a codeword, i.e. a left vertex, to a ternary string if 
the ternary string can be obtained from the codeword by erasing $e$ coefficients, and flipping $w$ others. 
There are thus $\binom{n}{e}\binom{n-e}{w}$ outgoing edges from
every codeword node, and a total of 
\begin{equation}
  \label{eq:edges}
  2^{nR}{n \choose e}{n-e \choose w}
\end{equation}
edges connecting the left 
and right sides of the bipartite graph.
Define $r$ such that
\begin{equation}
  \label{eq:r}
  {n\choose e}2^{n-e}2^r = 2^{nR}{n \choose e}{n-e \choose w}
\end{equation}
and assume parameters have been chosen such that $r$ is positive.
 
Now Alice picks a codeword $c$ uniformly at random and sends it over
the channel: one error pattern of weight $w$ and $e$ erasures will
happen, all of them are equally likely. This means that after
transmission an edge of the graph has been chosen uniformly among the
total number \eqref{eq:edges} of edges. We now argue that most edges
are connected to a right vertex with large degree.

Let $\alpha >0$. 
The number of edges connected to binary strings on the right of the graph, whose degree is smaller than $2^{r-\alpha}$ is 
$N_1+2N_2+3N_3+\ldots+2^{r-\alpha}N_{2^{r-\alpha}}$ where $N_i$ counts
the number of right nodes whose degree is $i$, and $\sum N_i={n
  \choose e}2^{n-e}$. 
Since $i\leq 2^{r-\alpha}$ for every $i$,
$N_1+2N_2+3N_3+\ldots+2^{r-\epsilon}N_{2^{r-\alpha}}\leq
2^{r-\alpha}\sum N_i=2^{r-\alpha}{n \choose e}2^{n-e}$. The number of
edges connected to right nodes whose degree is bigger than
$2^{r-\alpha}$ is then bounded from below by
\[
{n \choose e}\left(2^{n-e}2^r-2^{r-\alpha}2^{n-e}\right)={n \choose e}2^{n-e}2^r(1-2^{-\alpha}).
\]
This shows that for any $0<\alpha\leq r$, with probability at least $1-1/2^{\alpha}$, Bob receives a vector $v$ such that, for any codeword $c$, 
$$P(C_X=c\, |\, Z=z) \leq \frac{1}{2^{r-\alpha}}$$
where $C_X$ is the input to the channel, with uniform distribution on
the code $C$, and $Z$ is the random variable consisting of the
received vector.

If the number $w$ of errors equals the expected number of errors
for $n-e$ transmitted bits over a BSC, i.e. $w=p(n-e)$, then we get
from \eqref{eq:r} 

$$r= R - (1-e/n)(1-h(p)) + o(1).$$

This proves a version of Lemma~\ref{lem:min-entropy} for a constant,
rather than binomially distributed number of errors. For a binomially
distributed number of errors, i.e. the result of an actual binary
symmetric channel, we may proceed as above by making the bipartite
graph weighted. We put an edge for every possible number of errors
$w$, $0\leq w\leq n-e$, and associate to every such edge the weight
$p^w(1-p)^{n-e-w}$. Concentration of measure around the mean number
$p(n-e)$ of errors ensures the same behaviour as in the constant error
case: we leave out the cumbersome details.


%
%
\section*{Acknowledgments}

The research of F. Oggier for this work was supported by the Singapore
National Research Foundation under Research Grant NRF-RF2009-07.
Part of this work was discussed while G. Z\'emor was visiting the Division of Mathematical Sciences 
in Nanyang Technological University, and while F. Oggier was visiting the Institute for Mathematics, Bordeaux University. The authors warmly thank both host institutions for their hospitality. They also wish to thank Yuval
Ishai for fruitful discussions and encouraging them to pursue this work.

%
%


\begin{thebibliography}{99}
\bibitem{Ahlswede}
R. Ahlswede, I. Csiszar, ``On Oblivious Transfer Capacity", in the
proceedings of {\em IEEE International Symposium on Information
  Theory}, Nice, 2007.

\bibitem{Arikan}
E. Arikan, ``Channel Polarization: A Method for Constructing Capacity-Achieving Codes for Symmetric Binary-Input Memoryless Channels'', {\em IEEE Transactions on Information Theory}, vol. 55, no. 7, July 2009. 
%
\bibitem{Bennett}
C.H. Bennett, G. Brassard, C. Cr\'epeau, U.M. Maurer, ``Generalized Privacy Amplification", {\em IEEE Trans. on 
Information Theory}, vol. 41, no. 6, November 1995.

\bibitem{BCR}
 G. Brassard, C. Cr\'epeau, J-M. Robert,
``Information theoretic reductions among disclosure problems,''
{\em 27th Symposium on Foundations of Computer Science}, 1986.
%
%

\bibitem{CCMZ}
Cascudo, R. Cramer, D. Mirandola, and G. Z\'emor,
``Squares of random linear codes'',
{\it IEEE   Trans. on Information
Theory}, IT-61, No 3 (2015) pp. 1159--1173.

\bibitem{CoverThomas}
T. M. Cover and J. A. Thomas, ``Elements of Information Theory,'' Wiley, 1991,
2006.

\bibitem{Crepeau}
C. Cr\'epeau, ``Efficient Cryptographic Protocols based on Noisy
Channels", {\em EUROCRYPT} 1997.

\bibitem{CK}
C. Cr\'epeau, J. Kilian, ``Achieving Oblivious Transfer using Weakened Security Assumptions,'' {\em 29th Symposium on Foundations of Computer Science}, 1988.
%


\bibitem{CMW}
C. Cr\'epeau, K. Morozov, S. Wolf,
``Efficient Unconditional Oblivious Transfer from Almost Any Noisy
Channel,''
{\em Security in Communication Networks}
LNCS Volume 3352, 2005, pp 47-59.

\bibitem{EGL}
S. Even, O. Goldreich, A. Lempel,
``A randomized protocol for signing contracts,''
{\em Communications of the ACM}
Vol. 28 N. 6, June 1985,
pp. 637-647.

\bibitem{Harnik}
D. Harnik, Y. Ishai, E. Kushilevitz, ``How Many Oblivious Transfers are Needed for Secure Multiparty Computation", {\em  Advances in Cryptology - CRYPTO 2007}.

\bibitem{HILL}
J. H{\aa}stad, R. Impagliazzo, L. A. Levin and M. Luby, ``A Pseudorandom
Generator from any One-way Function,'' {\it SIAM Journal on
  Computing}, Vol. 28 n. 4, pp. 1364-1396, 1999.

\bibitem{AGcodes}
T. H{\o}holdt, J. H. van Lint, R. Pellikaan, ``Algebraic geometry codes", in the 
{\em Handbook of Coding Theory}, 1998.
%

\bibitem{Ishai}
Y. Ishai, E. Kushilevitz, R. Ostrovsky, M. Prabhakaran, A. Sahai, J. Wullschleger, ``Constant-Rate Oblivious Transfer from Noisy Channels'', CRYPTO 2011:667-684.
%
\bibitem{MS}
F. Mc Williams, N.J.A. Sloane, ``The Theory of Error-Correcting Codes", {\em North-Holland Publisching Company}, 1977.
%
\bibitem{NaorPinkas}
M. Naor, B. Pinkas, ``Oblivious Transfer and Polynomial Evaluation", 
{\em Proceedings of 31 annual ACM Symposium on Theory of Computing (STOC '99)}, 1999. 
%
%
\bibitem{NW}
A. Nascimento, A. Winter, ``On the Oblivious Transfer Capacity of
Noisy Correlations'', Proc. ISIT 2006, Seattle, pp.1871-1875, 2006.

\bibitem{Pellikaan}
R. Pellikaan, ``On decoding by error location and dependent sets of error positions,''
{\it Discrete Math.} 106/107 (1992) pp. 369--381.
%
\bibitem{Rabin}
M. Rabin, ``How to Exchange Secrets by Oblivious Transfer,'' Tech. Memo TR-81, Aiken Computation Laboratory, Harvard University, 1981.
%

\bibitem{Hugues}
H. Randriambololona, ``Asymptotically good binary linear codes with
asymptotically good self-intersection spans," 
{\em IEEE Trans. on Information Theory}, vol. 59, no. 5, May 2013,
pp. 3038--3045.
%


\bibitem{Hugues2}
H. Randriambololona, ``On products and powers of linear codes
under componentwise multiplication'', in {\it Algorithmic Arithmetic,
  Geometry, and Coding Theory},  vol. 637 of Contemporary Math., AMS, 2015.
%

\bibitem{Stichtenoth}
H. Stichtenoth,
{\em Algebraic function fields and codes},
Springer, 1993.

\bibitem{TalVardy}
I. Tal, A. Vardy, ``How to Construct Polar Codes,''
{\em IEEE Trans. on Information Theory}, vol. 59, no. 10, 2013, pp. 6562--6582.
\end{thebibliography}
\end{document}